\relax
\documentclass[letterpaper]{article} %
\usepackage[margin=1.2in]{geometry}
\usepackage{times}  %
\usepackage{helvet}  %
\usepackage{courier}  %
\usepackage[hyphens]{url}  %
\usepackage{graphicx} %
\urlstyle{rm} %
\usepackage[square,numbers]{natbib}  %
\usepackage{caption} %
\DeclareCaptionStyle{ruled}{labelfont=normalfont,labelsep=colon,strut=off} %
\frenchspacing  %
\usepackage{hyperref}
\usepackage{verbatim}
\usepackage{booktabs}
\usepackage{algorithm}
\usepackage{algorithmic}

\usepackage{newfloat}
\usepackage{listings}
\lstset{%
	basicstyle={\footnotesize\ttfamily},%
	numbers=left,numberstyle=\footnotesize,xleftmargin=2em,%
	aboveskip=0pt,belowskip=0pt,%
	showstringspaces=false,tabsize=2,breaklines=true}
\floatstyle{ruled}
\newfloat{listing}{tb}{lst}{}
\floatname{listing}{Listing}

\setcounter{secnumdepth}{2} %

\title{An Exact Algorithm for the Linear Tape Scheduling Problem}
\author{
    Valentin Honor\'e, \textsuperscript{\hskip-1.5mm\rm 1}
    Bertrand Simon, \textsuperscript{\hskip-1.5mm\rm 1}
    Fr\'ed\'eric Suter\textsuperscript{\rm 1,\rm 2} 
}
\date{
\small{

  \textsuperscript{\rm 1} IN2P3 Computing Center / CNRS, Lyon - Villeurbanne, France\\
  \textsuperscript{\rm 2} Oak Ridge National Laboratory, Oak Ridge, TN 37830, USA\\
    valentin.honore@cc.in2p3.fr, bertrand.simon@cc.in2p3.fr, frederic.suter@cc.in2p3.fr
}
}

\usepackage{xspace}
\usepackage{tikz}
\usepackage{amsthm,amsmath,amsfonts}
\usepackage{cleveref}
\newtheorem{definition}{Definition}
\newtheorem{theorem}{Theorem}
\newtheorem{lemma}{Lemma}

\newcommand{\virtuallb}{\textit{VirtualLB}\xspace}

\newcommand{\skipb}{\ensuremath{\textit{skip}(a,b,\nskip)}\xspace}
\newcommand{\detourc}{\ensuremath{\mathit{detour_c}(a,b,\nskip)}\xspace}
\newcommand{\uturn}{\ensuremath{U}\xspace}
\newcommand{\nfiles}{\ensuremath{n_{\mathit{req}}}\xspace}

\newcommand{\algo}[1]{\textsc{\bfseries #1}\xspace}
\newcommand{\DP}{\algo{DP}}
\newcommand{\simpledp}{\algo{SimpleDP}}
\newcommand{\logdp}{\algo{LogDP}}
\newcommand{\logdpone}{\algo{LogDP(1)}}
\newcommand{\logdpfive}{\algo{LogDP(5)}}
\newcommand{\nodetour}{\algo{NoDetour}}
\newcommand{\gs}{\algo{GS}}
\newcommand{\fgs}{\algo{FGS}}
\newcommand{\nfgs}{\algo{NFGS}}
\newcommand{\lognfgs}{\algo{LogNFGS}}

\newcommand{\opt}{\ensuremath{\mathit{OPT}}\xspace}

\newcommand{\ltsp}{\textsc{LTSP}\xspace}
\newcommand{\ie}{{\it i.e.,}\xspace}

\newcommand{\myxscale}{0.15cm}
\newcommand{\myyscale}{0.008cm}

\usepackage{transparent}

\usepackage{todonotes}

\newcommand{\CC}{IN2P3 Computing Center\xspace}

\makeatletter
\newcommand{\uturnpenalty}{15}
\newcounter{mytime}

\tikzset{myline/.style={thick, red!50!black}}%
\newcommand{\drawLine}[1]{
\setcounter{mytime}{0}
\ADD{100}{-#1}{\sol}
\ABSVALUE{\sol}{\sol}
\addtocounter{mytime}{\sol}
\draw[myline, |-] (100,0) --  node [at start, right, yshift=-20, align=left] {Reading\\ Head} 
              (#1,-\themytime);
\checknextarg{#1}
}
\newcommand{\checknextarg}[1]{\@ifnextchar\bgroup{\nextline{#1}}{}}
	\newcommand{\nextline}[2]{

	\ADD{#2}{-#1}{\sol}
	\ABSVALUE{\sol}{\sol}
	\ADD{\sol}{\uturnpenalty}{\sol}
	\addtocounter{mytime}{\sol}

	\@ifnextchar\bgroup{
		\draw[myline] (#1,-\themytime+\sol) --  (#1,-\themytime+\sol-\uturnpenalty) --  (#2,-\themytime);
		\nextline{#2}}{
	\draw[myline,->] (#1,-\themytime+\sol) --  (#1,-\themytime+\sol-\uturnpenalty)  --  (#2,-\themytime);}
}

\makeatother

\usepackage{calculator}
\clubpenalty10000
\widowpenalty10000

\begin{document}

\maketitle

\bigskip

\begin{abstract}
  Magnetic tapes are often considered as an outdated storage technology, yet
  they are still used to store huge amounts of data. Their main interests are a
  large capacity and a low price per gigabyte, which come at the cost of a much
  larger file access time than on disks. With tapes, finding the right ordering
  of multiple file accesses is thus key to performance. Moving the reading head
  back and forth along a kilometer long tape has a non-negligible cost and
  unnecessary movements thus have to be avoided. However, the optimization of
  tape request ordering has rarely been studied in the scheduling
  literature, much less than I/O scheduling on disks. For instance, 
  minimizing the average service time for several read requests on a linear
  tape remains an open question.

  Therefore, in this paper, we aim at improving the quality of service
  experienced by users of tape storage systems, and not only the peak
  performance of such systems. To this end, we propose a 
  reasonable polynomial-time exact algorithm while this problem and simpler
  variants have been conjectured NP-hard. We also refine the proposed model by
  considering U-turn penalty costs accounting for inherent mechanical
  accelerations. Then, we propose low-cost variants of our optimal algorithm by
  restricting the solution space, yet still yielding an accurate suboptimal
  solution. Finally, we compare our algorithms to existing  
  solutions from the literature on logs of the mass storage management system
  of a major datacenter. This allows us to assess the quality of
  previous solutions and the improvement achieved by our low-cost
  algorithm. Aiming for reproducibility, we make available the complete
  implementation of the algorithms used in our evaluation, alongside the
  dataset of tape requests that is, to the best of our knowledge, the first of
  its kind to be publicly released.
\end{abstract}

\bigskip

\section{Introduction}
Initially designed for media recording, the usage domain of
magnetic tapes has broadened over the decades and remains a real
competitor to disk storage even for scientific data. The main
advantages of this storage medium are a large storage capacity
for a reasonable price, a better data preservation, better
security, and better energy efficiency. Indeed, it has been
estimated that total costs are reduced by an average factor of 6 
when archiving data on tape rather than
disks~\cite{reine2015continuing}.

Recent tape cartridges can store up to 20 terabytes of data on a
one-kilometer-long physical storage, longitudinally divided into
few bands which are each also longitudinally divided into dozens
of wraps. Wraps are in turn divided into dozens of tracks. All
tracks in a given wrap are read or written simultaneously. A tape
is then composed of hundreds of parallel wraps which are
logically linked together in a \emph{linear
serpentine}. Intuitively, the storage space can be seen as a
single linear wrap coiled liked a serpent on the tape.

Thousands of such cartridges are usually stored on the shelves of robotic
libraries, as books would be stored in an actual library. Then, when data on a
given cartridge is not needed, its storage does not induce any power
consumption, and it cannot be accessed by intruders. All these advantages of
tape storage made it an unavoidable candidate for the storage of the exabytes
of data produced at CERN by the Large Hadron Collider
experiments~\cite{davis2019cern} or data related to European weather
forecast~\cite{ECMWF}.

The huge amount of data stored in such tape libraries, typically hundreds of
petabytes, is usually managed by a Mass Storage Management System ({\it e.g.,}
IBM HPSS or HPE DMF) which keeps track of the exact location of the files
stored on tapes and answers to users' requests. When a particular file is
needed, the tape it is on will be fetched by a robotic arm, brought to a tape
drive, and loaded. Then, the reading head of the tape drive is positioned to
the beginning of the file to read, or to the first available space to write new
data, and the I/O operation eventually occurs.

The main drawback of tape storage is the high latency to access a given
file. Mounting a tape into a tape reader requires a delay of about a
minute~\cite{cano2021cern}. Moreover, seeking from one file to another adds
more delay to place the reading head on the correct wrap and adapt the
longitudinal position of the tape in front of the head.  When accesses to
multiple files are requested, finding the right ordering of these accesses is
thus key to performance. Moving the reading head back and forth along a
kilometer long tape has a non-negligible cost and unnecessary movements thus
have to be avoided. However, the optimization of tape request ordering has
rarely been studied in the scheduling literature, much less than I/O scheduling
on disks. For instance, minimizing the average service time for several
read requests, \ie the average time at which each request is read, on a linear
tape remains an open question.

Therefore, in this paper, we aim at improving the quality of service
experienced by users of tape storage systems, and not only the peak performance
of such systems. To this end, we consider a simplified model of magnetic tape
composed of a single linear track. This is a strong assumption as the
serpentine nature of tapes leads to important optimization decisions. However,
it still reflects local batch requests which would target files belonging to
the same wrap. We also believe it is a fundamental model which should be deeply
understood. In this model, a tape can therefore be seen as a linear sequence of
files which all have to be read from the left to the right. The input of the
problem we consider is a list of files that are requested, associated with a
number of requests for each file. The objective is to design a schedule (\ie a
trajectory of the reading head on the linear tape) to read all the requested
files when the reading head is initially positioned on the right of the
tape. We consider the average service time as a metric, to ensure a fair
service among all requests. In order to model the temporality of a given
schedule, we assume that the speed of the tape movement is constant, although it is a
mechanical device with inertia. We moderate this inaccuracy by taking into
account the deceleration induced by a U-turn of the tape as a nominal penalty.
Note that we do not consider write requests, which are usually done separately,
nor update requests, which are avoided as they damage nearby data.
Following~\cite{cardonha2016}, we refer to this problem as the Linear Tape
Scheduling Problem~(\ltsp), noting that our model differs from theirs by
accounting for U-turn penalties.

\ltsp has been previously studied by~\citet{cardonha2016,cardonha2018} and
conjectured to be impossible to be solved efficiently. Indeed, even simpler
variations restricting either file requests to be unique or file sizes to be
equal have been conjectured NP-hard~\cite{cardonha2018}. We answer this open
question in this paper by providing a polynomial algorithm optimally solving
the unrestricted \ltsp problem, also considering U-turn penalties. More
precisely, we show that a carefully designed Dynamic Programming implementation
(technique which has been considered in~\cite{cardonha2018} but was deemed not
conclusive) allows us to compute an optimal schedule in a reasonable polynomial
time. We then provide faster suboptimal algorithms and compare the performance
of these original algorithms to that of existing algorithms on a dataset
built from the recent history of the tape library of the \CC.~\footnote{We discuss the connections with a concurrent work \cite{cardonha21} in \Cref{app:cardonha21}.}

The remainder of this paper is organized as follows. In \Cref{sec:related}, we
review the literature on tape scheduling and related optimization problems. In
\Cref{sec:framework}, we define and discuss precisely the model and the
objective function. In \Cref{sec:algo} we expose our algorithmic solutions to
this problem. Finally, in \Cref{sec:expe}, we present the results of our
simulations on a real-world dataset.

\section{Related work}
\label{sec:related}

The closest works to the present paper~\cite{cardonha2016,cardonha2018} study
\ltsp under the same tape model, but without U-turn penalties. The authors note
that the algorithm minimizing the maximal service time, \ie the time at which
all files are read, can present an average service time arbitrarily far from
the optimal. They show that the opposite algorithm reading the rightmost files
first is however a 3-approximation, and design a few greedy optimizations.
Finally, they provide several heuristics for the online variant and
compare their solutions through simulations.

\ltsp is related to several well-studied problems in theoretical computer
science. The most famous is probably the Traveling Salesperson Problem, where
the goal is to visit $n$ points as fast as possible following given travel
times between each pair of points. This problem is notoriously
NP-hard in general metrics~\cite{lawler1985} so approximation algorithms and
special cases have been studied extensively. One of the most recent development
has been the design of an algorithm surpassing the long-standing approximation
ratio of $1.5$~\cite{karlin2021slightly}. \ltsp is closer to its restriction on
the real line, for which it can be solved in $O(n^2)$~\cite{bjelde2020}.

A key difference between \ltsp and the Traveling Salesperson Problem resides in
the objective function, as \ltsp aims at minimizing the average service time.
This objective is captured by the Minimum Latency Problem (also called
Traveling Repairperson Problem) for which the best known approximation ratio is
$3.59$~\cite{chaudhuri2003}. This problem is already strongly NP-hard on
trees~\cite{sitters2002minimum}, although it admits a
PTAS~\cite{sitters2021polynomial}, but can be solved polynomially on the line
if there are no deadlines~\cite{afrati1986complexity}.

Keeping the average service time objective function but adding delays at every
visited vertex leads to a more general definition of the Traveling Repairperson
Problem. This problem is strongly NP-hard on the line when deadlines or release
times are involved~\cite{BOCK2015690} but its complexity when requests can be
served at any time is still unknown.

A different kind of related problems has been studied under the name of
Dial-A-Ride.  Here, requests are composed of a source and a destination and the
goal is to move vehicles to transport all requests from their source to their
destination.  Several variants of the problem exist, even restricted to the
offline setting, depending on the presence of release times or the number and
capacities of the vehicles, see~\cite{de2004computer}. The Dial-A-Ride problem
can be seen as a generalization of \ltsp but is often studied with the
objective of minimizing the total service time. A simpler variant, close to
our problem, considers a single vehicle able to transport one request at a time
without being able to drop it before the destination, and is shown to be
polynomially solvable~\cite{atallah1988efficient} when minimizing the total
service time.  A formulation aiming at minimizing the average service
time has been shown to be NP-hard, relying on request irregularities
(overlapping trips in different directions)~\cite[Theorem 7]{de2004computer}
which cannot happen in \ltsp where requests are unidirectional and files are
disjoint.

We did not cover all the work done on the online version of these problems,
when future requests are unknown, but we refer the reader to~\cite{bjelde2020}
for an overview of such results.

The literature on tape scheduling is rather scarce although the role of tape
libraries is far from negligible in modern computing centers. Contrarily to
this paper, most studies consider a more complex tape geometry, usually a
serpentine. \citet{hillyer1996modeling} focus on low-level hardware information
(\emph{key points}) to evaluate several
heuristics. \citet{sandsta1999improving} propose a low-cost function to
approximate the seeking time between two points of the tape.
\citet{more2000scheduling} design algorithms to schedule the mounts of
different tapes in a library. \citet{talkCERN} evaluates the seek times between
any two points of a recent tape, data which is used as input in a few
heuristics to compare their performance. Software designed to optimize tape
usage appear to often sort read requests based on their tape
position~\cite{schaeffer2011treqs,zhang2006hptfs}. A common point to
these studies is that the focus has mostly been on cost modeling due to the
two-dimensional nature of the tape and low-level hardware aspects, but publicly
released scheduling algorithms are often greedy ones. A proprietary solution
used by some tape libraries, named Recommended Access Order~(RAO), exploits
such two-dimensional tape information but its underlying algorithm is not
available~\cite[Section 4.27]{IBMmanual}.

\section{Model and Problem Descriptions}
\label{sec:framework}

We consider a linear tape of length $m$, divided successively in $n_f$ disjoint
files $(f_1,\dots,f_{n_f})$ of integer size $s(f_i)$. Let $\ell(f_i)$ be the
\emph{length} between the left of the tape and the left of the file $f_i$ and
$r(f_i)=\ell(f_i)+s(f_i)$. We say that $f_i<f_j$ if file $f_i$ is located on the
left of $f_j$, \ie $\ell(f_i)<\ell(f_j)$. We assume that these file
properties can be queried in constant time by an algorithm. We are given a set
of $n$ requests on $\nfiles$ files among the $n_f$ files of the tape, with
possible duplicates, where each request is a file. Let $x(f_i)$ be the number of
requests allocated to file $f_i$.

At the beginning, the reading head is positioned on the right of the tape. A
request is fulfilled when its file has been traversed from the left to the right
by the reading head. We assume the reading head moves at constant speed (the
tape is actually moving and the head is fixed, but switching roles helps the
exposure), a time unit being necessary to traverse a file chunk of size 1 in
either direction. We also consider a time penalty \uturn for each U-turn
performed by the head.

The main limitation of this model concerns the track geometry. Modern tapes are
not constituted of a single linear track, and being aware of their serpentine
geometry is essential to optimize the reading sequence and seeking costs.
However, this simpler model is able to emulate accurately local considerations
when files written in the same period are located in a single track. It is also
fundamental to deeply understand the complexity of such a model knowing that
the serpentine model is much closely related to NP-hard problems such as the
Traveling Salesperson Problem.

The assumption of the tape moving at a constant speed in front of the reading
head is obviously inaccurate due to acceleration and deceleration inherent to
mechanical devices. However, the cruise speed is typically reached fast enough so this
approximation is  satisfactory apart from U-turns. The nominal U-turn
penalty used to take into account these slow-downs therefore improves the model
accuracy.

Other limitations of the model such as the undifferentiated reading speed
or the forced starting position of the head are discussed as extensions
in the conclusion.

The objective is to provide a {\it schedule}, \ie a trajectory of the reading
head on the tape, that serves all requests and minimizes the sum of service
times of requests, \ie the sum of the times needed before each request is
satisfied. Note that we formally define the objective as minimizing the sum,
but it is more intuitive in terms of a quality of service to speak about the
average service time, an objective which is completely equivalent.

A simple lower bound \virtuallb on the optimal solution is achieved by using
$n$ virtual heads serving each request optimally, \ie each reading head
moves directly to the left of its assigned file then reads it.

$$ \virtuallb = \sum_{f} x(f)\cdot (m-\ell(f) + s(f) + \uturn ) .$$

Minimizing the average service time is one of the most classic scheduling
objective functions with the maximal service time. The latter has been the main
focus of studies on the serpentine model as it minimizes the time spent using
the tape which decreases wear and delay of other tapes reads. However, in the
linear tape model, minimizing the maximal service time is trivial while
minimizing the average service time leads to more fairness among users. This is
especially true in a case of low tape usage in which tapes are rarely waiting
to be mounted.

Note that we follow the definition of the problem
from~\cite{cardonha2016,cardonha2018} as the input consists of a list of
requests rather than the set of requested files associated with their
multiplicity. The motivation comes from practice, where a set of read requests
has to be satisfied, and it may happen that several read requests target the
same file. The consequence is that polynomial-time algorithms are allowed to
have a complexity polynomial in $n$ and $\nfiles$ and not only in $\log n$ and
$\nfiles$. This makes a difference if the number of requests is not bounded by a
polynomial in the number of requested files. It is natural to study first this variant of the problem, as so-called
high-multiplicity problems are notoriously much harder to solve~\cite{gabayphd}.

\section{Algorithm}
\label{sec:algo}

This section presents the main contribution of this paper, the \DP algorithm
solving \ltsp in time $O(\nfiles^3\cdot n)$.  Before describing \DP, we start
with giving useful definitions, preliminary remarks, and brief descriptions of
existing solutions. We then also present the \logdp variant algorithm, which
limits the search space of \DP to provide a suboptimal solution with a smaller
time complexity of $O(\nfiles\cdot n \cdot \log^2 \nfiles)$. 

\subsection{Preliminaries}

In this section, we study the structure of optimal solutions to provide a simple
description of such schedules.

In any optimal solution, the reading head will move to the leftmost request,
then to the rightmost still unread request. Before reaching the leftmost
request, the head may move back and forth in possibly intricate patterns to
read relevant files first. We say that the solution includes the \emph{detour}
$(a,b)$, with $a$ and $b$ being two requested files such that $a\leq b$, if the head goes
directly to $r(b)$ then back to $\ell(a)$ after first attaining $\ell(a)$. As shown
previously~\cite{cardonha2016} and later stated formally in our setting (see
\Cref{lem:laminar}), there always exists an optimal solution which can be
described only via a set of detours. Furthermore, a detour can be totally
surrounded by a later one (\ie $(a_1,b_1)$ and $(a_2,b_2)$ with
$a_1<a_2<b_2<b_1$) but otherwise two detours cannot intersect each other (\ie
$(a_1,b_1)$ and $(a_2,b_2)$ with $a_1\leq a_2\leq b_1\leq b_2$).

\newcommand{\addfile}[3][f_\thefile]{
\addtocounter{file}{1}
\fill[blue!20!white] (#2,45) rectangle (#3,-215-7*\uturnpenalty);
\node[blue!50!white] at ({(#3+#2)*0.5}, -255-7*\uturnpenalty) {$#1$};
\draw[blue!30!white, ultra thick, dashed] ({(#3+#2)*0.5}, 45) -- (file);
}
\begin{figure}[h!]
\centering
\resizebox{\linewidth}{!}{
\begin{tikzpicture}[x=\myxscale,y=\myyscale]
\newcounter{file}
\node (file) at (50,150) {Requested files};
\addfile{20}{25}
\addfile{30}{35}
\addfile{40}{50}
\addfile{60}{65}
\addfile{70}{80}
\addfile{90}{95}

\drawLine{90}{95}{60}{65}{40}{80}{20}{35}

\draw[ultra thick, |-|] (100,45) -- + (-90,0) node [at start, right] {Tape};

\draw[thick, ->] (8,-50) -- + (0,-80) node [midway, left] {\it time~~};

\end{tikzpicture}
}
\caption{Example of schedule for reading six files described by the
  $[(f_6,f_6),(f_4,f_4),(f_3,f_5)]$ detour list. Note the delays caused by U-turn penalties.}
\label{fig:detours}
\end{figure}
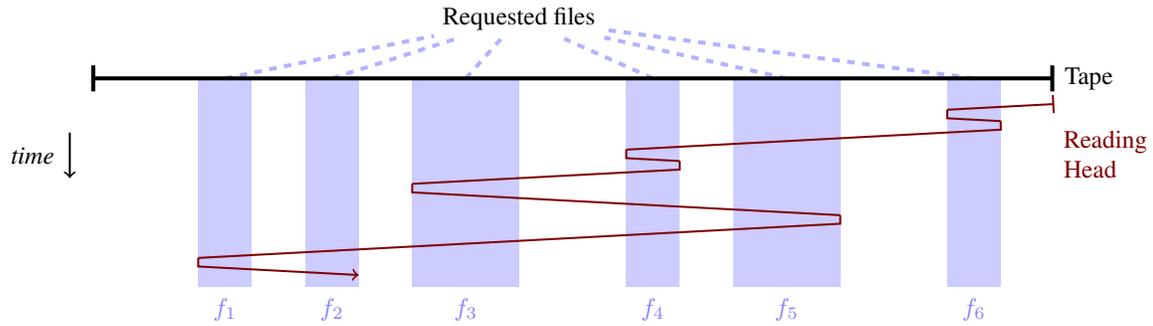

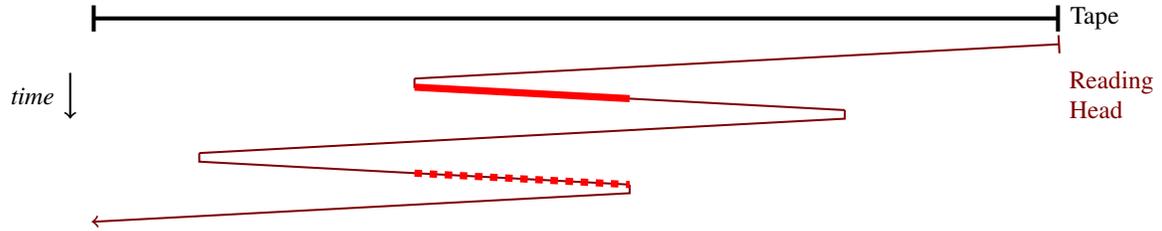
\begin{figure}[h!]
\centering
\resizebox{\linewidth}{!}{
\begin{tikzpicture}[x=\myxscale,y=\myyscale]
\drawLine{40}{80}{20}{60}{10}

\draw[line width=1mm, dashed, red] (40,-180-3*\uturnpenalty) -- (60,-200-3*\uturnpenalty);
\draw[line width=1mm, red] (40,-60-\uturnpenalty) -- (60,-80-\uturnpenalty);

\draw[ultra thick, |-|] (100,45) -- + (-90,0) node [at start, right] {Tape};

\draw[thick, ->] (8,-50) -- + (0,-80) node [midway, left] {\it time~~};

\end{tikzpicture}}
\caption{Example of non-optimal schedule. In the second detour, the movement in
  thick dotted lines is useless as these files have already been read earlier (thick solid line).}
\label{fig:nodetour}
\end{figure}

\Cref{fig:detours} illustrates a possible solution while
\Cref{fig:nodetour} shows detours overlapping in a suboptimal manner.

We denote this property on the set of detours in any optimal solution as being
\emph{strictly laminar}, following a definition of {\it laminar} used in the
scheduling literature, see for instance~\cite{chen2018}. We consider that all solutions
contain the detour $(f_{n_1},f_{n_f})$, which reads all skipped files, even if
the last movements may not count towards the objective as the rightmost
requests may have already been served.

An unread file at the right of the current reading head position is called
\emph{skipped}. It will be read later when the head moves back to the right,
possibly after the head read the leftmost file. For instance, on
\Cref{fig:detours}, when $f_4$ is first reached by the head, $f_5$ is skipped,
but when the head first reaches $f_2$, no file is skipped.

\subsection{Existing algorithms}
\label{sec:prelim}

One of the simplest algorithm would be to make no detour. The head simply moves
to the leftmost requested file and then reads all files left-to-right. Despite
minimizing the makespan, it can be arbitrarily far from the optimal solution in
our model~\cite{cardonha2016}. We refer to this algorithm as \nodetour.

The opposite strategy would be to perform a detour on each requested file. This
algorithm, named \gs for {\bfseries G}reedy {\bfseries S}cheduling, has been
proved to be a 3-approximation without U-turn penalties~\cite{cardonha2016}. The worst-case instance is simply composed of a small file with many requests located on the left of a large file with a single request.
But of course harsh penalties can arbitrarily degrade
its guarantees.

To improve the basic solution offered by \gs, the \fgs
algorithm~\cite{cardonha2018} detects detrimental detours in multiple
evaluation passes and {\bfseries F}ilters them out.

As \fgs does not benefit from multi-file detours, the same authors designed the
\nfgs algorithm, allowing {\bfseries N}on-atomic detours. In essence, for each pair of files $a<b$ starting from the
left, it tests whether it would be beneficial to add the detour $(a,b)$,
after removing the detour starting from $a$ if it existed. Despite its
relatively large time complexity,  \nfgs remains greedy in nature,
definitely sealing any detour that seems beneficial. A variant exploring only
detours spanning over $O(\log \nfiles)$ requested files, \lognfgs, has been proposed
to trade search space for running time.

Note that the \fgs, \nfgs, and \lognfgs algorithms can be adapted to take into
account the U-turn penalty in their decisions, although losing their
approximation factor of 3 which was inherited from \gs. We provide a description
in Appendix~\ref{app:algos} for completeness.

The structure of existing solutions, relying on greedy evaluation passes,
illustrates the difficulty of the problem. The decision of making a detour or
not depends on what happens before (a detour increases the delay on skipped
files) and after (subsequent detours will increase the delay on files that have
been skipped).  Detours can also be intricate, as shown
by~\Cref{fig:detours}. It thus seems hardly possible to take correct decisions
on detours when each decision may influence the others. Consequently,
\citet{cardonha2018} only considered a very restricted model (identical file
sizes and a single request per file) in which the exact solution is simple but
did not otherwise get any algorithm with an approximation ratio below 3.

\subsection{Algorithm}

\newcommand{\nskip}{\ensuremath{n_{\textit{skip}}}\xspace}
\newcommand{\fab}{\ensuremath{F_{a,b}}\xspace}
\newcommand{\nleft}[1]{\ensuremath{n_{\ell}(#1)\xspace}}
\newcommand{\leftb}[1]{\ensuremath{\textit{left}(#1)\xspace}}

Here, we describe the \DP dynamic programming algorithm. It uses carefully
selected memoization to store the cost of specific solutions used to build an
optimal schedule.

The dynamic program cells have a number and three parameters: two requested files
$a$ and $b$ and a number $\nskip<n$. The objective for each cell is to compute
the best possible strategy for the reading head between $r(b)$ and $\ell(a)$
knowing that:
\begin{enumerate} 
\item there is a detour $(a,f)$ for some file $f\geq b$,
\item there is no detour $(f_1,f_2)$ for any files $f_1$, $f_2$ satisfying $a<f_1<b<f_2$,
\item when the reading head reaches $r(b)$,
exactly $\nskip$ file requests have been skipped.
\end{enumerate}
 The content of the cell describes the impact on the total cost of the movement
 made by the reading head between the first time it reaches $r(b)$ and the
 first time it reaches $r(b)$ after having read $a$. In other words, it equals
 the sum of the lengths for all requests on any file $f$ of the ``unnecessary''
 paths traversed by the head in this time interval and before serving the file
 $f$. Unnecessary means that we do not count the cost that would also be
 incurred to \virtuallb on a file $f$ between $a$ and $b$, as it is inevitable
 and this simplifies the formulas. The U-turn penalty on $a$ is therefore not
 counted as \virtuallb would also have one U-turn penalty, but other U-turn
 penalties in this interval are counted.

We define \nleft{b} as the number of requests on files located on the left of
$b$, excluding $b$, and let \leftb{b} be the closest requested file located to
the left of $b$.

The value of cell $T[a,b,\nskip]$ is then defined as follows:
\begin{itemize}
\item If $b=a$, then there is a detour from $\ell(b)$ to at least $r(b)$ so we delay all
pending requests by $2s(b)$, and incur no additional cost to $b$, see
\Cref{fig:detourfb,fig:detourbf}. Therefore,
$$T[b,b,\nskip] =  2 \cdot s(b) \cdot
(\nskip + \nleft{b}).$$

\renewcommand{\addfile}[3][f_\thefile]{
\addtocounter{file}{1}
\fill[blue!20!white] (#2,45) rectangle (#3,-145-3*\uturnpenalty);
\node[blue!50!white] at ({(#3+#2)*0.5}, -175-3*\uturnpenalty) {$#1$};
}

\begin{figure}[tb]
\centering
\resizebox{.91\linewidth}{!}{
\begin{tikzpicture}[x=\myxscale,y=\myyscale*1.5]

\addfile[b]{40}{55}

\addfile[f]{60}{70}

\draw[thick, |-, red!50!black] (100,0) --  node [at start, right, yshift=-20, align=left] {Reading\\ Head} 
                  (60,-40) ;
\draw[thick, ->, red!50!black] (60,-80-\uturnpenalty) -- (70,-90-\uturnpenalty);

\draw[ultra thick, dashed, red] (55,-45) -- (40,-60) -- (40,-60-\uturnpenalty) -- (55,-75-\uturnpenalty);

\draw[ultra thick, |-|] (100,45) -- + (-90,0) node [at start, right] {Tape};

\draw[thick, ->] (8,-50) -- + (0,-80) node [midway, left] {\it time~~};

\end{tikzpicture}}
\caption{Cost incurred by a detour over file $b$ to a skipped file $f$. The
  solid line represents the shortest path to serve $f$. The red dotted line
  represents the delay incurred by this detour to the service time of
  $f$. Other detours are not illustrated here. Subsequent figures follow the
  same logic.}
\label{fig:detourbf}
\end{figure}
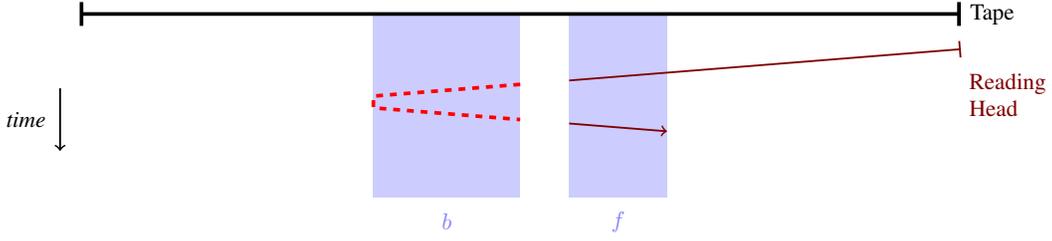

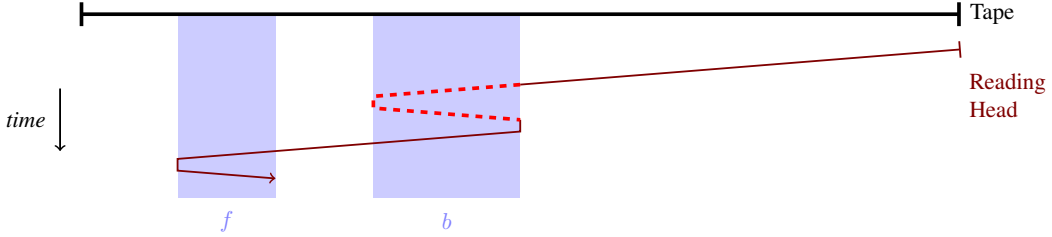
\begin{figure}[tb]
\centering
\resizebox{.91\linewidth}{!}{
\begin{tikzpicture}[x=\myxscale,y=\myyscale*1.5]

\addfile[b]{40}{55}

\addfile[f]{20}{30}

\draw[thick, |-, red!50!black] (100,0) --  node [at start, right, yshift=-20, align=left] {Reading\\ Head} 
                  (55,-45) ;
\draw[thick, ->, red!50!black] (55,-75-\uturnpenalty) -- (55,-75-2*\uturnpenalty) -- (20,-110-2*\uturnpenalty) -- (20,-110-3*\uturnpenalty) -- (30,-120-3*\uturnpenalty);

\draw[ultra thick, dashed, red] (55,-45) -- (40,-60) -- (40,-60-\uturnpenalty) -- (55,-75-\uturnpenalty);

\draw[ultra thick, |-|] (100,45) -- + (-90,0) node [at start, right] {Tape};

\draw[thick, ->] (8,-50) -- + (0,-80) node [midway, left] {\it time~~};

\end{tikzpicture}}
\caption{Cost incurred by the detour over $b$ to a left file $f$.}
\label{fig:detourfb}
\end{figure}

\item Otherwise, let \fab be the set of requested files located between
$a$ and $b$ excluding $a$. There are several possibilities to consider to determine
the value of the cell: either $b$ is skipped (it will be read with the detour
starting from $a$), or read sooner than by the detour starting from $a$.
In the latter case, it is read on a detour ending on $b$ as there is no detour
going to the right of $b$ starting righter than~$a$. This detour can start
from any file in \fab. Then, we have:

\begin{align*}
&\skipb := ~  T[a,\leftb{b},\nskip+x(b)] \\
&~~~~+ 2\cdot (r(b)-r(\leftb{b}))\cdot (\nskip+\nleft{a})\\
&~~~~+ 2\cdot (\ell(b)-r(\leftb{b}))\cdot x(b)\\\\
&\detourc := ~  T[a,\leftb{c},\nskip]+T[c,b,\nskip]\\
&~~~~+ 2\cdot (r(b)-r(\leftb c))\cdot (\nskip+\nleft{a})\\
&~~~~+ 2\cdot U\cdot (\nskip+\nleft{c})
\\
\\
&T[a,b,\nskip] = ~   \min\big( ~ \textit{skip}(a,b,\nskip) ~;\\ 
&~~~~\hspace{2.68cm}
\min_{c\in \fab} ~  \mathit{detour_c}(a,b,\nskip)~ \big)
\end{align*}

In the first case, we recurse on a smaller window skipping  file $b$, hence
increasing \nskip. We also account for the cost of the detour starting from $a$
over the files between \leftb b and $b$ for the requests that will be fulfilled
later. The differences with earlier are that (1) we also have to account for
the cost to traverse the unrequested files at the left of $b$ and (2) requests
between $a$ and $\leftb b$ are served before the head comes back to the right,
hence there are \nleft{a} delayed files and not \nleft{b}. See
\Cref{fig:detourskip,fig:detourskipleft}.
\renewcommand{\addfile}[3][f_\thefile]{
\addtocounter{file}{1}
\fill[blue!20!white] (#2,45) rectangle (#3,-215);
\node[blue!50!white] at ({(#3+#2)*0.5}, -255) {$#1$};
}
\begin{figure}[h!]
\centering
\resizebox{.91\linewidth}{!}{
\begin{tikzpicture}[x=\myxscale,y=\myyscale]
\addfile[a]{20}{30}
\addfile[\leftb b]{40}{50}
\addfile[b]{60}{70}
\addfile[f]{90}{95}

\draw[thick, |-, red!50!black] (100,0) --  node [at start, right, yshift=-20, align=left] {Reading\\ Head} 
                  (90,-10) ;
\draw[thick, ->, red!50!black] (90,-140-\uturnpenalty) -- (95,-145-\uturnpenalty);

\draw[ultra thick, dashed, red] (70,-30) -- (50,-50);
\draw[ultra thick, dashed, red] (50,-100-\uturnpenalty) -- (70,-120-\uturnpenalty);

\draw[thick, dotted, black] (50,-50) -- (20,-70) -- (20,-70-\uturnpenalty) -- (50,-100-\uturnpenalty);

\draw[ultra thick, |-|] (100,45) -- + (-90,0) node [at start, right] {Tape};

\draw[thick, ->] (8,-50) -- + (0,-80) node [midway, left] {\it time~~};

\end{tikzpicture}}
\caption{Impact of $skip(a,b,\nskip)$ on a skipped file
  $f$. The thin dotted line represents the recursively computed impact (which
  may include subsequent detours), and the dashed line the impact directly
  accounted for.
}
\label{fig:detourskip}
\end{figure}
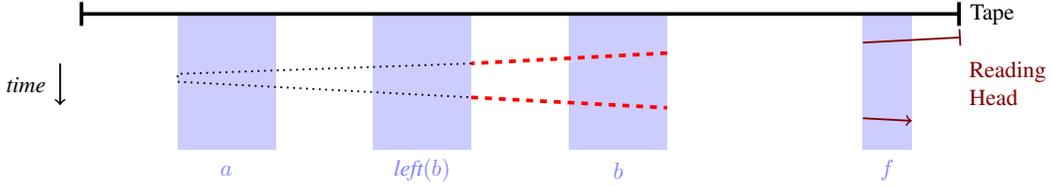
\renewcommand{\addfile}[3][f_\thefile]{
\addtocounter{file}{1}
\fill[blue!20!white] (#2,45) rectangle (#3,-215-3*\uturnpenalty);
\node[blue!50!white] at ({(#3+#2)*0.5}, -255-3*\uturnpenalty) {$#1$};
}

\begin{figure}[h!]
\centering
\resizebox{.91\linewidth}{!}{
\begin{tikzpicture}[x=\myxscale,y=\myyscale]
\addfile[a]{40}{50}
\addfile[\leftb b]{60}{70}
\addfile[b]{80}{90}
\addfile[f]{20}{30}

\draw[thick, |-, red!50!black] (100,0) --  node [at start, right, yshift=-20, align=left] {Reading\\ Head} 
                  (90,-10) ;
\draw[thick, ->, red!50!black] (90,-110-\uturnpenalty) -- (90,-110-2*\uturnpenalty) -- (20,-180-2*\uturnpenalty) -- (20,-180-3*\uturnpenalty) -- (30,-190-3*\uturnpenalty);

\draw[ultra thick, dashed, red] (70,-90-\uturnpenalty) -- (90,-110-\uturnpenalty) (70,-30) -- (90,-10);

\draw[thick, dotted, black] (70,-30) -- (40,-60)  -- (40,-60-\uturnpenalty) -- (70,-90-\uturnpenalty) ;

\draw[ultra thick, |-|] (100,45) -- + (-90,0) node [at start, right] {Tape};

\draw[thick, ->] (8,-50) -- + (0,-80) node [midway, left] {\it time~~};

\end{tikzpicture}}
\caption{Illustration of the impact of $skip(a,b,\nskip)$ on a left file $f$. 
}
\label{fig:detourskipleft}
\end{figure}
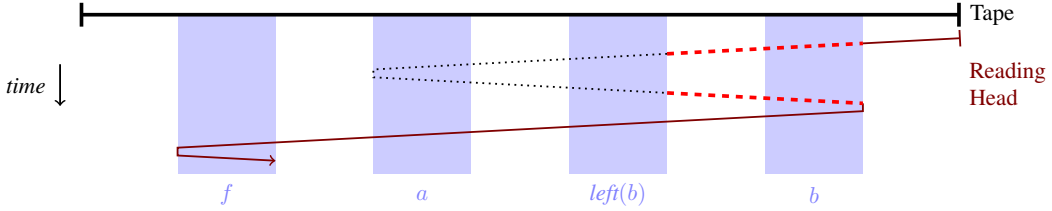%

Finally, we account for the additional cost to serve $b$ not covered by the
recursive call: the path over the unrequested files directly at the left of
$b$, see \Cref{fig:detourskipb}.

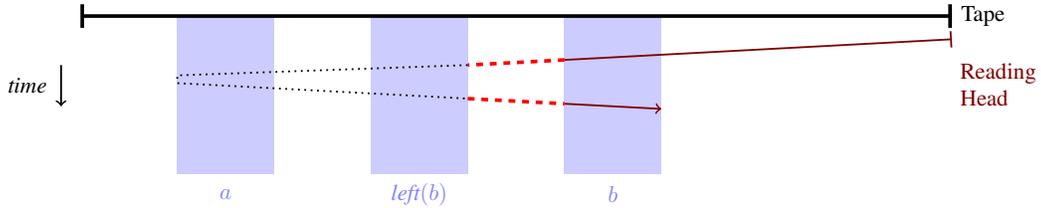
\begin{figure}[h!]
\centering
\resizebox{.9\linewidth}{!}{
\begin{tikzpicture}[x=\myxscale,y=\myyscale]

\addfile[a]{20}{30}
\addfile[\leftb b]{40}{50}
\addfile[b]{60}{70}

\draw[thick, |-, red!50!black] (100,0) --  node [at start, right, yshift=-20, align=left] {Reading\\ Head} 
                  (60,-40) ;
\draw[thick, ->, red!50!black] (60,-110-\uturnpenalty) -- (70,-120-\uturnpenalty);

\draw[ultra thick, dashed, red] (60,-40) -- (50,-50);
\draw[ultra thick, dashed, red] (50,-100-\uturnpenalty) -- (60,-110-\uturnpenalty);

\draw[thick, dotted, black] (50,-50) -- (20,-70) -- (20,-70-\uturnpenalty) -- (50,-100-\uturnpenalty);

\draw[ultra thick, |-|] (100,45) -- + (-90,0) node [at start, right] {Tape};

\draw[thick, ->] (8,-50) -- + (0,-80) node [midway, left] {\it time~~};

\end{tikzpicture}}
\caption{Impact of $skip(a,b,\nskip)$ on $b$.
}
\label{fig:detourskipb}
\end{figure}

In the second case, we have a detour $(c,b)$ for some $c$ in \fab. Hence, all
these files will be read when the head reaches $\leftb c$ so we do not change
\nskip in the recursive calls. We still need to account for the cost of the
detour starting from $a$ over the interval $(r(\leftb c),b)$. See
\Cref{fig:detourunskip,fig:detourunskipleft}. We also charge here the U-turn
penalties for all requests who will be served after the head reaches $a$, \ie
for all pending requests for which the U-turn at $c$ is not the last one before
they get served (the second U-turn penalty charged is for the U-turn occurring
at $b$ after the detour $(c,b)$).

\end{itemize}

Then, the overall solution can be computed through the call to
$T[f_1,f_{n_f},0]$. The structure of the recursive calls minimizing this value
leads to the detours taken by the underlying optimal solution.

\subsection{Proof of the algorithm}

First, we need a structural result to guarantee that the restriction to strictly laminar detours preserves the optimal solution. A similar result has been established in \cite{cardonha2016}. We state it here for self-consistency and a more precise result.

\renewcommand{\addfile}[3][f_\thefile]{
\addtocounter{file}{1}
\fill[blue!20!white] (#2,45) rectangle (#3,-235-3*\uturnpenalty);
\node[blue!50!white] at ({(#3+#2)*0.5}, -275-3*\uturnpenalty) {$#1$};
}
\begin{figure}[h!]
\centering
\resizebox{.91\linewidth}{!}{
\begin{tikzpicture}[x=\myxscale,y=\myyscale]
\addfile[a]{15}{25}
\addfile[\leftb c]{30}{40}
\addfile[c]{50}{60}
\addfile[b]{70}{80}
\addfile[f]{90}{95}

\draw[thick, |-, red!50!black] (100,0) --  node [at start, right, yshift=-20, align=left] {Reading\\ Head} 
                  (90,-10) ;
\draw[thick, ->, red!50!black] (90,-220-3*\uturnpenalty) -- (95,-225-3*\uturnpenalty);

\draw[ultra thick, dashed, red] (80,-80-\uturnpenalty) -- (80,-80-2*\uturnpenalty) -- (40,-120-2*\uturnpenalty);
\draw[ultra thick, dashed, red] (40,-170-3*\uturnpenalty) -- (80,-210-3*\uturnpenalty);

\draw[thick, dotted, black] (80,-20) -- (50,-50) -- (50,-50-\uturnpenalty) -- (80,-80-\uturnpenalty) ;

\draw[thick, dotted, black] (40,-120-2*\uturnpenalty) -- (15,-145-2*\uturnpenalty) -- (15,-145-3*\uturnpenalty) -- (40,-170-3*\uturnpenalty);

\draw[ultra thick, |-|] (100,45) -- + (-90,0) node [at start, right] {Tape};

\draw[thick, ->] (8,-50) -- + (0,-80) node [midway, left] {\it time~~};

\end{tikzpicture}}
\caption{Impact of $\mathit{detour_c}(a,b,\nskip)$ on a skipped file $f$. 
}
\label{fig:detourunskip}
\end{figure}

\renewcommand{\addfile}[3][f_\thefile]{
\addtocounter{file}{1}
\fill[blue!20!white] (#2,45) rectangle (#3,-315-5*\uturnpenalty);
\node[blue!50!white] at ({(#3+#2)*0.5}, -355-5*\uturnpenalty) {$#1$};
}
\begin{figure}[h!]
\centering
\resizebox{.91\linewidth}{!}{
\begin{tikzpicture}[x=\myxscale,y=\myyscale]
\begin{scope}[xshift=10*0.15cm]
\addfile[a]{15}{25}
\addfile[\leftb c]{30}{40}
\addfile[c]{50}{60}
\addfile[b]{70}{80}
\addfile[f]{5}{10}

\draw[thick, |-, red!50!black] (86,-14) --  node [at start, right, yshift=-20, align=left] {Reading\\ Head} 
                  (80,-20) ;
\draw[thick, ->, red!50!black] (80,-210-3*\uturnpenalty) -- (80,-210-4*\uturnpenalty) -- (5,-285-4*\uturnpenalty) -- (5,-285-5*\uturnpenalty) -- (10, -290-5*\uturnpenalty);

\draw[ultra thick, dashed, red] (80,-80-\uturnpenalty) -- (80,-80-2*\uturnpenalty) -- (40,-120-2*\uturnpenalty);
\draw[ultra thick, dashed, red] (40,-170-3*\uturnpenalty) -- (80,-210-3*\uturnpenalty);

\draw[thick, dotted, black] (80,-20) -- (50,-50) -- (50,-50-\uturnpenalty) -- (80,-80-\uturnpenalty) ;

\draw[thick, dotted, black] (40,-120-2*\uturnpenalty) -- (15,-145-2*\uturnpenalty) -- (15,-145-3*\uturnpenalty) -- (40,-170-3*\uturnpenalty);
\end{scope}

\draw[ultra thick, |-|] (100,45) -- + (-90,0) node [at start, right] {Tape};

\draw[thick, ->] (8,-50) -- + (0,-80) node [midway, left] {\it time~~};

\end{tikzpicture}}

\caption{Impact of $\mathit{detour_c}(a,b,\nskip)$ on a left file $f$. 
}
\label{fig:detourunskipleft}
\end{figure}

\begin{lemma}
\label{lem:laminar}
There exists an optimal solution composed only of strictly laminar detours.%
\end{lemma}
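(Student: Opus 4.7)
The plan is to prove the lemma by an exchange argument: starting from any optimal solution, I iteratively transform its detour list into a strictly laminar one while never increasing the cost.

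The atomic exchange operates on a non-laminar pair. Let $d_f = (a_f, b_f)$ and $d_s = (a_s, b_s)$ be two non-laminar detours, with $d_f$ executed first in time. Since the head moves leftward between detours, $a_s < a_f$, and combined with non-laminarity this forces $a_s < a_f \leq b_s < b_f$. The key observation is that $d_f$ has already read every file in $[a_f, b_f]$, and in particular every file in $[a_f, b_s]$; so the portion of $d_s$'s excursion extending beyond $\ell(a_f)$ only revisits already served files --- pure wasted movement. I would replace $d_s$ by $d_s' = (a_s, b_s')$, where $b_s'$ is the closest requested file strictly less than $a_f$ (it always exists because $a_s$ itself qualifies, possibly yielding a trivial detour $(a_s, a_s)$). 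Feasibility is preserved: files in $[a_s, b_s']$ are read by $d_s'$, files in $[a_f, b_s]$ were already served by $d_f$, and the interval $(b_s', a_f)$ contains no requested file by construction.

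Because $d_s$ and $d_s'$ share both endpoints at $\ell(a_s)$ and both direction reversals, the rest of the schedule is unaffected except for a uniform time shift of $2(r(b_s) - r(b_s')) \geq 0$ earlier. Every service time is therefore preserved or reduced, so the total cost weakly decreases. The new pair $(d_f, d_s')$ is laminar --- in fact disjoint, since $b_s' < a_f$. Iterating the exchange on any remaining non-laminar pair yields, after finitely many steps, a strictly laminar schedule of cost no greater than the original optimum, and therefore itself optimal. Termination is guaranteed because each nontrivial exchange strictly decreases $\sum_i r(b_i)$, an integer quantity bounded below by $\sum_i r(a_i)$.

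The main technical care lies in two points: U-turn bookkeeping and interactions with other detours. For U-turns, $d_s'$ preserves the number and locations of $d_s$'s direction reversals, so no penalty is silently added, removed, or relocated, and later U-turns in the schedule are only shifted in time, not restructured. For other detours, since $d_s' \subseteq d_s$ as intervals, any detour previously disjoint from or containing $d_s$ remains laminar with $d_s'$; the only potential new conflict is a detour properly contained in $d_s$ but extending past $b_s'$, which is itself a non-laminar pair that the iterative procedure resolves in a subsequent step, without breaking the monotone decrease of $\sum_i r(b_i)$ that drives termination.
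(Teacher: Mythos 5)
Your exchange argument captures the second half of the paper's proof faithfully: shorten the later detour $(a_s, b_s)$ to $(a_s, \leftb{a_f})$, observe that the files it stops reading were already served by $d_f$, note that the rest of the schedule is uniformly shifted earlier, and terminate by the strict decrease of $\sum_i r(b_i)$. This is exactly the paper's shortening step.

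The gap is that you never establish the structure you start from. You write ``starting from any optimal solution, I iteratively transform its detour list,'' and then lean on ``Since the head moves leftward between detours, $a_s < a_f$.'' But an arbitrary optimal trajectory is a priori just a continuous path of the reading head, not a list of detours with that monotone property. The paper's proof spends its first paragraph establishing precisely this: (a) whenever the head turns right at a position $x$ and later turns back left at $y$, it must return all the way to $x$ before turning right again, since an intermediate right turn in $(x,y)$ reads nothing new and only wastes time; (b) $x$ must be $\ell(a)$ and $y$ must be $r(b)$ for requested files $a$, $b$, or else the movement past the nearest requested endpoint is dead weight; (c) consequently the trajectory decomposes into detours, each returning to its own starting point, so they are executed in non-increasing order of left endpoint. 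Without step (a) in particular, detours need not nest or overlap cleanly: the head could turn right, turn left, and turn right again without ever coming back, and your pair $(d_f, d_s)$ with $a_s < a_f$ is not even well-defined. You need to prove this decomposition (or at least state it as a prerequisite structural claim and justify it) before the exchange argument can begin. A smaller point: you claim non-laminarity ``forces'' $b_s < b_f$, but the paper's definition of strictly laminar also excludes the non-strict nesting $b_s = b_f$; your transformation handles that case too, but the justification for dismissing it is optimality of the current schedule, not non-laminarity per se, and should be stated.
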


\begin{proof}
Consider an optimal
  solution. Once the leftmost file is reached, it must go straight to the
  rightmost unread file. We now consider the part of solution before the
  leftmost file is reached.

  Each time the head turns to the right at position $x$, it has to turn back to
  the left later at point $y$. It cannot turn again to the right before
  reaching $x$ as this is suboptimal: no new file between $x$ and $y$ can be
  read this way. Furthermore, $x$ must be the left of a requested file $a$ and
  $y$ the right of a requested file $b$ or this is suboptimal. So the solution
  made a detour $(a,b)$. Continuing this analysis, we can decompose the optimal
  solution as a set of detours, counting again a global detour
  $(f_1,f_{n_f})$. Note that we have shown that all detours start and end at
  the same position $x$, so detours are done in a non-increasing order of the
  left file.

  We now show that these detours are strictly laminar. Assume there are two
  detours $(a_1,b_1)$ and $(a_2,b_2)$ with $a_1\leq a_2\leq b_1\leq b_2$. After
  the first detour $(a_2,b_2)$ is done, all files between $a_2$ and $b_2$, so
  between $a_2$ and $b_1$ are read. So the second detour $(a_1,b_1)$ can be
  shortened to $(a_1,\leftb {a_2})$ if $a_1<a_2$ or removed if $a_1=a_2$: no
  file is read later and the cost does not increase for any file.

  This concludes the lemma.
\end{proof}

We are now ready to prove the correctness of \DP. This proof relies on an
induction involving several case distinctions ensuring every cost is counted
once. It requires some technical care to precisely define which cost is counted
at each step.

\begin{theorem}
\label{thm:DP}
\DP solves optimally \ltsp in time $O(\nfiles^3 \cdot n)$.
\end{theorem}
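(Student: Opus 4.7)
The plan is to prove correctness by induction on the number of requested files in the interval $[a,b]$, and then bound the complexity by direct counting. By \Cref{lem:laminar}, it suffices to search over schedules composed only of strictly laminar detours, which gives any optimal trajectory a canonical nested structure topped by the global detour $(f_1,f_{n_f})$, so the dynamic program only needs to enumerate the possible nesting decisions.

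The inductive invariant restates the informal description preceding the recurrence: $T[a,b,\nskip]$ equals the minimum, over all strictly laminar detour sets satisfying conditions~(1)--(3), of the extra cost relative to \virtuallb accumulated between the first arrival of the head at $r(b)$ and its first return to $r(b)$ after having read $a$. For the base case $a=b$, the only admissible trajectory is the single sweep down to $\ell(b)$ and back, which delays each of the $\nskip+\nleft{b}$ pending requests by exactly $2s(b)$; the U-turn at $\ell(b)$ is already absorbed in \virtuallb and the U-turn at $r(b)$ belongs to the enclosing segment, which matches the stated formula. For $a<b$, strict laminarity and condition~(2) force $b$ to be read either by the detour starting at $a$ (the \skipb branch, in which $b$ joins the skipped set and the recurrence reduces to $T[a,\leftb{b},\nskip+x(b)]$), or by a sub-detour $(c,b)$ for some $c\in\fab$ (the \detourc branch, which decomposes into the two independent sub-cells $T[a,\leftb{c},\nskip]$ and $T[c,b,\nskip]$). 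These cases are exhaustive and mutually exclusive, so minimizing over them and invoking the inductive hypothesis yields optimality.

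The main obstacle, where most of the writing effort goes, is the precise cost accounting: every unit of delay must be attributed to exactly one pending request at the correct level of recursion, with neither double counting nor omission. I would verify each recurrence term against the corresponding figure. In the \skipb branch, the factor $2(r(b)-r(\leftb{b}))$ is the round-trip distance over the unrequested gap left of $b$, affecting the requests skipped to the right of $b$ ($\nskip$) and the requests strictly left of $a$ ($\nleft{a}$), since requests between $a$ and $\leftb{b}$ are served before the head reenters that gap; the separate term $2(\ell(b)-r(\leftb{b}))x(b)$ isolates the additional delay imposed on the $x(b)$ requests of $b$ itself, which will only be served during the final left-to-right sweep. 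In the \detourc branch, the analogous geometric term covers the gap between $\leftb{c}$ and $b$, while the $2U(\nskip+\nleft{c})$ term charges exactly those requests whose service still lies after both the U-turn at $\ell(c)$ and the U-turn at $r(b)$; the U-turn on $c$ absorbed by \virtuallb is the one for requests served on $c$ or to its left. Matching each illustrated trajectory segment in \Cref{fig:detourbf,fig:detourfb,fig:detourskip,fig:detourskipleft,fig:detourskipb,fig:detourunskip,fig:detourunskipleft} to the correct coefficient is the crux of the argument.

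For complexity, there are $O(\nfiles^2 n)$ cells indexed by a pair of requested files $(a,b)$ and by $\nskip\in\{0,\dots,n\}$; each cell minimizes a closed-form expression over at most $\nfiles$ choices of $c$, giving $O(\nfiles)$ work per cell and $O(\nfiles^3 n)$ overall. The optimal total cost equals $T[f_1,f_{n_f},0]+\virtuallb$, and the optimal schedule itself is recovered by standard DP backtracking through the recursive calls that realize the minima.
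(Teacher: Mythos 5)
Your proposal is correct and takes essentially the same approach as the paper: induction on the interval $[a,b]$ appealing to \Cref{lem:laminar}, the same interpretation of $T[a,b,\nskip]$ as the delay accumulated between the two visits to $r(b)$ measured against \virtuallb, the same exhaustive two-case split (skip $b$, or sub-detour $(c,b)$), and the same cell-count times per-cell-work complexity argument. The algebraic bookkeeping you defer is precisely where the paper's proof sketch spends most of its effort, introducing an explicit per-file delay function and verifying each coefficient of the recurrence (including the subtle $-\uturn$ adjustments) through a case analysis on the position of each file relative to $a$, $b$, $c$, and $\leftb{c}$.
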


\begin{proof}[Proof sketch]
\newcommand{\virtoptabf}[1][b]{\ensuremath{\mathit{VirtOPT_{#1}}(f)}\xspace}
\newcommand{\delay}[2][t_1,t_2]{\ensuremath{\mathit{Delay}_{#1}(#2)}\xspace}

The complexity follows from the dynamic programming definition: there are
$O(\nfiles^2\cdot n)$ cells which are each computed in time $O(\nfiles)$.

We show for all $a,b, \nskip$ by induction on $b-a$ that the computation of
cell $T[a,b,\nskip]$ is correct. Specifically, our induction hypothesis
considers any best solution $S_{a,b,\nskip}$ of the problem given three
additional constraints: (1) there is a detour starting from $a$ and going to
$b$ or a righter file; (2) there is no detour starting between $r(a)$ and
$\ell(b)$ and going to a file righter than $b$; and (3) when the reading head
first reaches $r(b)$, exactly $\nskip$ files have been skipped. Let $t_1$ be
the time when the reading head first reaches $r(b)$ and $t_2$ be the first time
the reading head reaches $r(b)$ (before performing a potential U-turn) after
having read $a$ in $S_{a,b,\nskip}$. For each file $f$, let $t(f)$ the time
when it is served in $S_{a,b,\nskip}$. For each file $f\leq b$, let
$\virtoptabf=r(b)-\ell(f)+s(f)+\uturn$ be the minimum cost to serve $f$ by a
virtual head starting at $r(b)$ and $\virtoptabf=0$ for $b>f$. See
\Cref{fig:proofvirtopt}.

\renewcommand{\addfile}[3][f_\thefile]{
\fill[blue!20!white] (#2,45) rectangle (#3,-315);
\node[blue!50!white] at ({(#3+#2)*0.5}, -355) {$#1$};
}
\begin{figure}[htb]
\resizebox{\linewidth}{!}{
\begin{tikzpicture}[x=\myxscale,y=\myyscale,    my label/.style n args={2}{label={[text=red!50!black, label distance=-2pt]#1:#2}},
pointer/.style = {circle, fill, red!50!black, minimum size = 3pt, inner sep=0}]
\addfile[]{20}{25}
\addfile[]{30}{35}
\addfile[a]{40}{50}
\addfile[f]{60}{65}
\addfile[b]{70}{80}
\addfile[]{90}{95}

\drawLine{70}{80}{40}{95}{20}{35}

\node[pointer, my label={30}{$t_1$}] at (80,-20) {};
\node[pointer, my label={30}{$t_2$}] at (80,-120-3*\uturnpenalty) {};
\node[pointer, my label={-170}{$t(f)$}] at (65,-105-3*\uturnpenalty) {};

\draw[thick, red] (80,-250) -- node [at start, right, yshift=-5, align=left] {\virtoptabf}  (60,-270) -- (60,-270-\uturnpenalty) -- (65,-275-\uturnpenalty);
\draw[ultra thick, green!60!black, dashed] (80,-20) -- (70,-30) -- (70,-30-\uturnpenalty) -- (80,-40-\uturnpenalty) (60,-60-2*\uturnpenalty)--(40,-80-2*\uturnpenalty)--(40,-80-3*\uturnpenalty)--(60,-100-3*\uturnpenalty);
\node[green!60!black] at (55,-20) {\delay{f}};

\draw[ultra thick, |-|] (100,45) -- + (-90,0) node [at start, right] {Tape};

\draw[thick, ->] (8,-50) -- + (0,-80) node [midway, left] {\it time~~};
\end{tikzpicture}}
\caption{Illustration of $t_1$, $t_2$, $t(f)$, \virtoptabf and \delay{f} for $f\leq b$.}
\label{fig:proofvirtopt}
\end{figure}

The hypothesis is that cell $T[a,b,\nskip]$ is equal to the sum for all
files $f$ of the impact \delay{f} of what happens between $t_1$ and $t_2$ in
$S_{a,b,\nskip}$ on the service time of $f$, with a basis corresponding to
\virtuallb, {\it i.e.,}

\begin{align}
\label{eq:defdelay}
T[a,b,\nskip] &= \sum_{f} x(f) \cdot  \delay{f}\\
\nonumber\text{with}:\quad
\delay f &:= 0 &\text{ if } t(f)\leq t_1\\
\nonumber\delay f &:= t_2-t_1-\uturn &\text{ if } t(f) > t_2\\
\nonumber\delay f &:= t(f)-t_1-\virtoptabf &\text{ if } t_1 < t(f) \leq  t_2.
\end{align}

Intuitively, for files served after $t_2$, the reading head comes back at the
place it had in $t_1$ at time $t_2$, with the opposite orientation. The delay is
however not equal to $t_2-t_1$ because we should not to count the U-turn
penalty here if a skipped file on the right of $b$ is read within the same
detour starting on $a$. Therefore, the delay equals $t_2-t_1-\uturn$. Counting
the cost based on \virtuallb allowed to simplify the computations in several
places, but in this definition it leads to a less intuitive value of the delay.
For files served between $t_1$ and $t_2$, the file is served at $t(f)$ and we
subtract \virtoptabf to obtain the additional cost on top of the virtual lower
bound.

We now show by induction on $b-a$ that \Cref{eq:defdelay} is correct.
First, consider $T[b,b,\nskip]$ for any $b$, $\nskip$. There are four types of files to consider.
\begin{itemize}
\item $f=b$: we have $t(f)=t_1+2s(b)+\uturn$ and $\virtoptabf=2s(b)+\uturn$ so $\delay{f}=0$,
\item $f>b$ and is not skipped: we have $t(f)\leq t_1$ so \mbox{$\delay{f}=0$},
\item $f>b$ and is skipped: we have $t(f)>t_2$ so $\delay{f}= 2s(b)+\uturn-\uturn$,
\item $f<b$: we have $t(f)>t_2$ so $\delay{f}= 2s(b)$.
\end{itemize}

Overall, there are $\nskip+\nleft b$ files who have a delay equal to $2s(b)$ so:

\begin{align*}
\sum_{f} x(f)\cdot\delay{f} = 2\cdot s(b)\cdot(\nskip+\nleft b) = T[b,b,\nskip].
\end{align*}

This completes the base case of the induction  ($b-a=0$).

\bigskip

Now, consider $T[a,b,\nskip]$ for any values of $a$, $b$ and $\nskip$ such that $a<b$ and assume the induction hypothesis. We want to show that:
\begin{equation}
\label{eq:proofIH}
T[a,b,\nskip] = \sum_{f} x(f)\cdot\delay{f}.
\end{equation}

 We consider
two cases on the structure of $S_{a,b,\nskip}$: either $b$ is served after $a$
or before $a$. 

Assume first $b$ is served after $a$. We want to show that in this case, we have:

\begin{align}
\label{eq:proofskipb}
\sum_{f} x(f)\cdot\delay{f} &=~ \skipb \\
 &=T[a,\leftb{b},\nskip+x(b)] + 2\cdot (r(b)-r(\leftb{b}))\cdot (\nskip+\nleft{a}) \nonumber\\
&+ 2\cdot (\ell(b)-r(\leftb{b}))\cdot x(b) \nonumber.
\end{align}

Let $t_1'$ (resp. $t_2'$) be the first time when the reading head reaches $r(\leftb
b)$ (resp. reaches $r(\leftb b)$ after having read $a$). See \Cref{fig:proofbskip}. So $t_1' = t_1 + r(b)-r(\leftb
b)$ and $t_2' = t_2 - r(b)+r(\leftb b)$. By the induction hypothesis, as (1) there is a detour from $a$ to a file righter than $\leftb b$ (2) there is no detour starting between $r(a)$ and $\ell(\leftb b)$ and going to a file righter than $\leftb b$ (because of the definition of $S_{a,b,\nskip}$ and the assumption that $b$ is served after $a$) and (3) exactly $\nskip + x(b)$ files are skipped at time $t_1'$, we have by the induction hypothesis:

$$T[a,\leftb{b},\nskip+x(b)] = \sum_f  x(f)\cdot\delay[t_1',t_2']{f}.$$

\begin{figure}[htb]
\resizebox{\linewidth}{!}{
\begin{tikzpicture}[x=\myxscale,y=\myyscale,    my label/.style n args={2}{label={[text=red!50!black, label distance=-2pt]#1:#2}},
pointer/.style = {circle, fill, red!50!black, minimum size = 3pt, inner sep=0}]
\addfile[]{20}{25}
\addfile[]{30}{35}
\addfile[a]{40}{50}
\addfile[\leftb b]{60}{65}
\addfile[b]{70}{80}
\addfile[]{90}{95}

\drawLine{60}{65}{40}{95}{20}{35}

\node[pointer, my label={30}{$t_1$}] at (80,-20) {};
\node[pointer, my label={30}{$t_2$}] at (80,-110-3*\uturnpenalty) {};
\node[pointer, my label={30}{$t_1'$}] at (65,-35) {};
\node[pointer, my label={210}{$t_2'$}] at (65,-95-3*\uturnpenalty) {};

\draw[ultra thick, |-|] (100,45) -- + (-90,0) node [at start, right] {Tape};

\draw[thick, ->] (8,-50) -- + (0,-80) node [midway, left] {\it time~~};

\end{tikzpicture}}
\caption{Illustration of $t_1'$ and $t_2'$ when $b$ is skipped. The detour on $\leftb b$ is not required but clarifies the definition of $t_2'$.}
\label{fig:proofbskip}
\end{figure}

We again consider several types of files $f$ to determine \delay{f} in function of \delay[t_1',t_2']{f}.

\begin{itemize}
\item $f=b$: we have $t(f) = t_2'+r(b)-r(\leftb{b})+\uturn$ and $\virtoptabf=2 s(b)+\uturn$ so we have 
\begin{align*}
\delay{f} &= t(f)-t_1-\virtoptabf\\
&= t_2'+r(b)-r(\leftb{b})+\uturn -(t_1'-(r(b)-r(\leftb{b}))) - 2s(b)-\uturn\\
&=t_2'-t_1' +2\cdot (r(b)-r(\leftb b) - s(b)) \\
&= \delay[t_1',t_2']{f} + 2\cdot(\ell(b)-r(\leftb b)).
\end{align*}
\item $f>b$ and is not skipped: $t(f)<t_1$ so $\delay{f} = \delay[t_1',t_2'] f = 0 $.
\item $f>b$ and is skipped: we have $\delay f = t_2-t_1-\uturn = \delay[t_1',t_2']{f} + 2\cdot (r(b)-r(\leftb b))$.
\item $f<a$: same as the previous case.
\item $a\leq f \leq \leftb b$: we have $\virtoptabf = \virtoptabf[\leftb b] + r(b)-r(\leftb b)$ so 
\begin{align*}
\delay f &= t(f)-t_1 - \virtoptabf \\
&= t(f)-t_1 - \virtoptabf + \delay[t_1',t_2']{f} - (t(f)-t_1'-\virtoptabf[\leftb b]) \\
&= \delay[t_1',t_2']{f} +(t_1'-t_1) - (r(b)-r(\leftb b)) \\
&=  \delay[t_1',t_2']{f}.
\end{align*} 
\end{itemize}

Therefore, we obtain \Cref{eq:proofskipb}.

\medskip

Now, assume $b$ is served before $a$. This means that there is a detour from some
file $c>a$ to a file at least as right as $b$. Furthermore, as we assumed that
$S_{a,b,\nskip}$ has no detour from such a file $c$ to a file righter than $b$,
this means that there is a detour $(c,b)$. Therefore, by the laminar property of \Cref{lem:laminar},
and the optimality of $S_{a,b,\nskip}$, there is no detour from a file lefter than $c$ to a file in $[c,b]$.
We want to show that in this case, we have:

\begin{align}
\label{eq:proofrecc}
\sum_{f} x(f)\cdot\delay{f} &=~\detourc\\
\nonumber  &=T[a,\leftb{c},\nskip] + T[c,b,\nskip] \\
\nonumber &\qquad \quad + 2\cdot (r(b)-r(\leftb{c}))\cdot (\nskip+\nleft{a})\\
\nonumber &\qquad \quad +2\cdot\uturn\cdot(\nskip+\nleft{c}).
\end{align}

First we argue that $S_{a,b,\nskip}$ is a solution compatible with the two cells
queried in the expression above. Regarding $T[a,\leftb{c},\nskip]$, we have:
\begin{enumerate}
\item 
a detour from $a$ to a file righter than $\leftb c$, 
\item
no detour from a file
in $[a,\leftb c]$ to a file righter than $\leftb c$ as there is none righter
than $b$ by definition of $S_{a,b,\nskip}$ and there is none between $c$ and $b$
because detours are laminar and there is a detour $(c,b)$, 
\item exactly $\nskip$
files have been skipped when reaching $r(\leftb c)$ as all files between $c$ and
$b$ are read during the detour $(c,b)$. 
\end{enumerate} 
Similarly, regarding $T[c,b,\nskip]$, we
have (1) a detour $(c,b)$ by assumption, (2) no detour from a file in $[c,b]$ to
a file righter than $b$ by definition of $S_{a,b,\nskip}$, and (3) exactly
$\nskip$ files skipped.

We denote by $t_2^0$ the first time $r(b)$ is reached after having read $c$ (before the U-turn penalty),
$t_1'=t_2^0+\uturn+r(b)-r(\leftb c)$ the first time $r(\leftb c)$ is reached and by $t_2' = t_2 -r(b)+r(\leftb c)$ the first time
$r(\leftb c)$ is reached after having read $a$.  Note that $t_1 < t_2^0 <
t_1'<t_2'<t_2$, see \Cref{fig:proofbnotskip}. Therefore, we obtain by the induction hypothesis:

$$T[a,\leftb{c},\nskip] = \sum_f  x(f)\cdot \delay[t_1',t_2']{f} \qquad \text{and} \qquad 
T[c,b,\nskip] = \sum_f  x(f)\cdot \delay[t_1,t_2^0]{f}.$$

\renewcommand{\addfile}[3][f_\thefile]{
\fill[blue!20!white] (#2,45) rectangle (#3,-335);
\node[blue!50!white] at ({(#3+#2)*0.5}, -375) {$#1$};
}
\begin{figure}[tb]
\resizebox{\linewidth}{!}{
\begin{tikzpicture}[x=0.15cm,y=0.008cm,    my label/.style n args={2}{label={[text=red!50!black, label distance=-2pt]#1:#2}},
pointer/.style = {circle, fill, red!50!black, minimum size = 3pt, inner sep=0}]
\addfile[]{20}{25}
\addfile[a]{30}{35}
\addfile[\leftb c]{40}{50}
\addfile[c]{60}{65}
\addfile[b]{70}{80}
\addfile[]{90}{95}

\drawLine{60}{80}{30}{95}{20}{25}

\node[pointer, my label={30}{$t_1$}] at (80,-20) {};
\node[pointer, my label={0}{$t_2^0$}] at (80,-60-\uturnpenalty) {};
\node[pointer, my label={30}{$t_2$}] at (80,-160-3*\uturnpenalty) {};
\node[pointer, my label={30}{$t_1'$}] at (50,-90-2*\uturnpenalty) {};
\node[pointer, my label={210}{$t_2'$}] at (50,-130-3*\uturnpenalty) {};

\draw[ultra thick, |-|] (100,45) -- + (-90,0) node [at start, right] {Tape};

\draw[thick, ->] (8,-50) -- + (0,-80) node [midway, left] {\it time~~};

\end{tikzpicture}}
\caption{Illustration of $t_2^0$, $t_1'$ and $t_2'$ when $b$ is not skipped.}
\label{fig:proofbnotskip}
\end{figure}

We again consider several types of files $f$ to determine $\delay{f}$:

\begin{itemize}
\item $f>b$ and is not skipped: all delays equal zero as $t(f)<t_1$.
\item $c\leq f\leq b$: we have  $\delay{f} = t(f) - t_1 - \virtoptabf = \delay[t_1,t_2^0]{f}$ as $t(f) \leq t_2^0$ and $\delay[t_1',t_2']{f}=0$ as $t(f)<t_1'$.
\item $a\leq f\leq \leftb c$: we have: 
\begin{align*}
\delay f &= t(f)-t_1 -\virtoptabf \\
&= t(f)-t_1'+t_1'-t_1 - \virtoptabf[\leftb c] - (r(b)-r(\leftb c)) +t_2^0-t_2^0-2\uturn+2\uturn\\
&=  \left(t(f)-t_1'-\virtoptabf[\leftb c]\right) + \left(t_1'-t_2^0-\uturn\right) - \left(r(b) -r(\leftb c)\right) \\
&\qquad\quad+ \left(t_2^0-t_1-\uturn\right) + 2\uturn\\
&= \delay[t_1',t_2']{f} + 0+ \delay[t_1,t_2^0]{f}+2\uturn.
\end{align*}
\item $f<a$: we have:
\begin{align*}
\delay{f} &= t_2-t_1-\uturn +2\uturn-2\uturn +t_2'-t_2'+t_1'-t_1'+t_2^0-t_2^0\\
&=  (t_2'-t_1'-\uturn) + (t_2-t_2') + (t_1'-t_2^0-\uturn) +(t_2^0-t_1-\uturn) + 2\uturn\\
&= \delay[t_1',t_2']{f} + 2\cdot (r(b)-r(\leftb c)) + \delay[t_1,t_2^0]{f}+ 2\uturn.
\end{align*}

\item $f>b$ and is skipped: same as the previous case.
\end{itemize}

Therefore, we get \Cref{eq:proofrecc}.

\medskip

We now conclude the proof of the induction.

As $S_{a,b,\nskip}$ must either serve $b$ before $a$ or include a detour $(c,b)$ as argued earlier, we have:

$$ \mathit{cost}(S_{a,b,\nskip}) \geq \min\big( ~ \textit{skip}(a,b,\nskip) ~; ~
\min_{c\in \fab} ~  \mathit{detour_c}(a,b,\nskip) \big) = T[a,b,\nskip].$$

And we get the equality by optimality of $S_{a,b,\nskip}$.

\bigskip

Finally, we get by induction, for all $a$, $b$, $\nskip$ and $S_{a,b,\nskip}$:

$$T[a,b,\nskip] := \sum_{f} x(f) \cdot  \delay{f}.$$

Note that $S_{f_1,f_{n_f},0}$ is equal to the optimal solution of the problem.
So, denoting by $t_0$ the starting time of the solution and $t_{\max}$ the time
at which the reading head would reach back the right of the tape in
$S_{f_1,f_{n_f},0}$ (it may stop earlier if the rightmost file is not skipped),
we get that the content of the cell $T[f_1,f_{n_f},0]$ is equal to:

\begin{align*}
T[f_1,f_{n_f},0] &= \sum_{f} x(f) \cdot  \delay[t_0,t_{\max}]{f}\\
&= \sum_f x(f)\cdot (t(f)-t_0 - \virtoptabf[f_{n_f}])\\
&= \mathit{cost}(S_{f_1,f_{n_f},0}) - \virtuallb.
\end{align*}

Therefore, we obtain that the optimal cost is equal to $\mathit{OPT}=T[f_1,f_{n_f},0]+\virtuallb$, which completes the proof.
\end{proof}

\subsection{Efficient heuristics}

\newcommand{\logparam}{\ensuremath{\lambda}\xspace}

The complexity of \DP may be prohibitive for an input containing hundreds of
requested files. We address this issue by providing two lighter algorithms named
\logdp and \simpledp. Both restrict the dynamic program search space,
in two different ways, in order to propose a suboptimal solution in a shorter time.

\subsubsection{Restricting the detours length: \logdp}

\logdp is equal to \DP except that when computing \detourc, $c$ is
restricted to be at most $\logparam\cdot\log \nfiles$ requested files apart from~$b$, for a
constant parameter $\logparam$. This reduces both the table dimensions and
complexity to query a single cell ans thus leads to a time complexity of $O(\nfiles\cdot n\cdot \log^2 \nfiles)$. Only detours of span at most $\logparam\cdot
\log \nfiles$ are then considered, and the solution returned is optimal among
this class of schedules. The parameter $\lambda$ can be adjusted to trade
accuracy for computing time. As this solution is by definition at
least as good as \gs, it is also a 3-approximation if $\uturn=0$.

We remark that the approximation ratio of \logdp is actually equal to 3 if
$U=0$, no better than the one of \gs. Indeed, consider an arbitrarily large
integer $z$ and an instance with $z$ requested files. The leftmost file $f_1$ is
small and non-urgent, $\ell(f_1)=0$, $s(f_1)=1$ and $x(f_1)=1$. The $z-1$ other
files are located far on the right and are contiguous, $\ell(f_{2+i}) = 2z^3+i$
for all $i<z-1$. All these files have a unit size except the rightmost one which
is large: $s(f_{2+i})=1$ for all $i<z-2$ and $s(f_z)=z^2$. Finally, $f_2$ is
urgent, $x(f_2)=z^2$, $f_z$ is less urgent, $x(f_z)=z$ and all other files have
exactly one request. The optimal solution has a single detour $(f_2,f_z)$ before
reading $f_1$ and has then a cost equal to $C_\opt = z^4+O(z^3)$, the $z^4$
coefficient coming from the requests associated to $f_2$. If detours spanning
$z-1$ files are forbidden, then we study two complementary cases. If $f_z$ is
read before $f_2$, then $f_2$ is read after a time larger than $3s(f_z)$ which
incurs a cost of $3z^4 = 3\cdot C_\opt - o(C_\opt)$. Otherwise, $f_z$ is read
after $f_1$, so after a time at least $2\ell(f_2)$ which incurs a cost of
$4z^4$. Hence, \logdp cannot have an approximation ratio smaller than 3. With an
arbitrary value of $U$, the approximation ratio is infinite as the restriction
on the detours length can lead to having to resort to many detours, consider the
example above with equivalent files for $f_2,\dots, f_z$.

\subsubsection{Forbidding intertwined detours: \simpledp}

\simpledp simplifies \DP in another aspect to reduce its complexity. It
restricts the search space to solutions in which all detour intervals are
disjoint: no file is traversed from the left to the right after having being
read, except possibly at the last phase after the leftmost file has been read.
The implementation of this modification is done by simply modifying the \detourc
function. Instead of using a recursive call to compute the optimal strategy
between $c$ and $b$ if there is a detour $(c,b)$, it is now possible to directly
incur the cost of the detour $(c,b)$ as no subsequent detour is allowed inside
this interval. This cost corresponds to the length of the detour for requests on
the left of $c$ and to the distance between $c$ and $f$ for any file $f$
requested between $c$ and $b$:
\begin{align*}
&\detourc := T[a,\leftb c,\nskip]\\
&~~~~+ 2\cdot (r(b)-r(\leftb c))\cdot (\nskip+\nleft{a})\\
&~~~~+ 2\cdot (U+r(b)-\ell(c))\cdot (\nskip+\nleft{c})\\
&~~~~+ \sum_{c<f\leq b} 2\cdot(\ell(f)-\ell(c))\cdot x(f).
\end{align*}

Consequently, the first index ($a$) of the dynamic program table becomes useless as it is
always equal to $f_1$, the leftmost requested file. The complexity of this
algorithm is then in $O(n\cdot\nfiles^2)$.

Contrarily to \logdp, we conjecture that the approximation ratio of \simpledp is
better than the factor~$3$ inherited from the greedy algorithm \gs when $U=0$.
Specifically, we exhibit an example showing that the approximation ratio is at
least $5/3$ and show that for any value of $U$, it is at most $3$. We believe
that the approximation ratio actually equals $5/3$.

\begin{lemma}
\label{lem:simpledp} The approximation ratio of \simpledp belongs to $[5/3,3]$ for any value of $U$.
\end{lemma}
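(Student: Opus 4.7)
The proof splits into the upper bound $\le 3$ and the lower bound $\ge 5/3$, which I would address independently.

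For the upper bound, the natural starting point is to compare \simpledp to a specific disjoint-detour schedule. Since the singleton detours $(f,f)$ produced by \gs are trivially disjoint, we have $\mathit{cost}(\simpledp) \le \mathit{cost}(\gs)$ on any instance. When $U = 0$, this immediately gives the $3$-approximation via the existing result of \cite{cardonha2016}. For general $U$, \gs loses its guarantee, so the plan is to argue directly: given any optimal laminar schedule $S^*$, construct a disjoint-detour schedule $S$ with $\mathit{cost}(S) \le 3 \cdot \mathit{cost}(S^*)$ by flattening the laminar detour tree provided by \Cref{lem:laminar}. Concretely, for each strictly nested pair $(a',b') \subset (a,b)$, we would either promote the inner detour to be performed disjointly before the outer one, or merge it into the outer detour by discarding the now-redundant passes. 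A per-file charging argument against the service times in $S^*$ would then yield the factor $3$.

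For the lower bound, the plan is to construct an explicit instance whose optimum uses strict nesting and on which every disjoint-detour schedule is strictly worse by the target ratio. A natural candidate uses three or four carefully-placed files, including a heavily-requested file whose best service time is achieved only via an inner detour nested inside a larger outer detour serving lighter files on both sides. One would then enumerate all qualitatively distinct disjoint-detour schedules---separate singleton detours \`a la \gs, merged detours covering the heavy file together with a neighbor, and intermediate variants---and compute their costs explicitly. Tuning the multiplicity of the heavy file and the inter-file distances, the ratio between the best disjoint cost and the nested optimum should tend to $5/3$ in the appropriate limit.

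The main obstacle is the $U > 0$ case of the upper bound, since the inheritance from \gs breaks down: the flattening argument must carefully balance the new U-turn penalties incurred when splitting a nested detour against the extra traversal distance introduced by merging one. A secondary difficulty for the lower bound is ruling out all disjoint schedules, including less obvious ones such as partial mergings, in the extremal analysis of the candidate instance.
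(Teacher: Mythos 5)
Your overall strategy matches the paper for both bounds: the lower bound is shown with a small parameterized instance whose optimum nests an urgent file inside a longer detour, and the upper bound compares \simpledp to a disjoint-detour schedule obtained by unnesting an optimal laminar schedule (the $U=0$ case via \gs is also noted in the paper). The lower-bound plan is sound and aligns with the paper's four-file gadget (a distant forcing file $f_1$, two urgent small files $f_2,f_3$, and a large less-urgent $f_4$ contiguous to $f_3$, with parameters scaled by $z$).

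The gap is in the upper bound's flattening step. Of the two operations you list, neither works in isolation: ``merge'' (dropping the inner detour and keeping the outer) arbitrarily delays the urgent files the inner detour existed to serve, and ``promote'' read as keeping the inner detour and merely shrinking the outer one strands the files in $[d,b]$ of the outer detour onto the final left-to-right pass, which can also blow up. The paper's operation is a specific hybrid that you don't quite describe and that is the crux of the argument: pick the major detour $(a,b)$ with the rightmost right endpoint and the detour $(c,d)$ inside it with the rightmost left endpoint, then replace them by the disjoint pair $(a,\leftb{c})$ and $(c,b)$. This keeps the total number of detours (hence U-turns) unchanged, only delays the requests strictly left of $c$ by $2(r(b)-r(d))$, and—because the right endpoint of the major detour strictly decreases at each iteration—the charged intervals $[d_i,b_i]$ are pairwise disjoint, so the total extra cost telescopes to at most $2\sum_f x(f)(m-r(f)) \le 2\opt$. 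Without specifying which pair to operate on, which of the two detours absorbs the right part, and why the resulting charges are disjoint, the ``per-file charging against $S^*$'' you invoke is not yet a proof, and this is exactly where you (correctly) flag the difficulty.
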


\begin{proof}

We first provide an instance on which the solution of \simpledp approaches
$5\opt/3$. We then prove that it never exceeds $3\opt$ for any value of $U$.

Consider an instance parameterized by a large integer $z$ with four requested
files $f_1$, $f_2$, $f_3$, and $f_4$. Let $\ell(f_1)=0$, $s(f_1)=1$ and
$x(f_1)=1$, this file is used to ``force'' the rightmost files to be read using
detours before reaching $f_1$. The three other files are located far on the right, $\ell(f_2)=3z^2$.
The files $f_2$ and $f_3$ are urgent, small, and separated: $s(f_2)=s(f_3)=1$,
$x(f_2)=x(f_3)=z^2$ and $\ell(f_3)=r(f_2)+z$. Finally, the file $f_4$ is large,
less urgent, and contiguous to $f_3$: $\ell(f_4)=r(f_3)$, $s(f_4)=z$, and
$x(f_4)=z$. The right end of the tape corresponds to the right of $f_4$. One
solution involving intertwined detours is to read first the small file $f_3$,
then $f_2$ and $f_4$ in the same detour before reading $f_1$, see \Cref{fig:LB}
for an illustration. The cost of this solution equals:

$$C_\opt:= x(f_2)\cdot (r(f_4)-\ell(f_2)) + x(f_3)\cdot s(f_4) + O(z^2) = 3z^3 + O(z^2).$$

\renewcommand{\addfile}[3][f_\thefile]{
\addtocounter{file}{1}
\fill[blue!20!white] (#2,45) rectangle (#3,-335-7*\uturnpenalty);
\node[blue!50!white, align=center] at ({(#3+#2)*0.5}, -380-7*\uturnpenalty) {$#1$};
}
\begin{figure}[h!]
\centering
\resizebox{\linewidth}{!}{
\begin{tikzpicture}[x=.5*\myxscale,y=.7*\myyscale]
\setcounter{file}{0}
\addfile[x(f_1)=1]{-100}{-97}
\addfile[x(f_2)=z^2~~~~~]{28}{31}
\addfile[x(f_3)=z^2~~~~~~~~]{61}{64}
\addfile[~~~~~~~~~~~x(f_4)=z]{65}{95}

\drawLine{61}{64}{28}{95}{-100}{-97}

\draw[ultra thick, |-|] (100,45) -- + (-210,0) node [at start, right] {Tape};

\draw[thick, ->] (-110,-50) -- + (0,-80) node [midway, left] {\it time~~};

\draw[black!80] (-97,70) edge[<->] node[midway,above] {$3z^2$}  (28,70)
                (31,70) edge[<->] node[midway,above] {$z$}  (61,70)
                (65,70) edge[<->] node[midway,above] {$z$}  (95,70);

\foreach \i in {1,2,...,10}
{
	\fill[white] (-77+2*\i,72) rectangle ++(1,-600-7*\uturnpenalty);
}

\end{tikzpicture}
}
\caption{Instance exhibiting a lower bound on the approximation ratio of \simpledp.}
\label{fig:LB}
\end{figure}
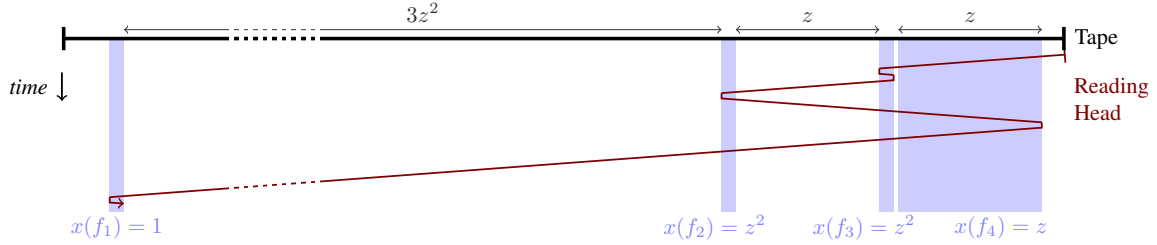

We then show that all solutions without intertwined detours have a cost of at
least $\frac{5}{3} C_\opt + O(z^2)=5z^3+O(z^2)$. We do a case analysis based on
which detour $f_4$ is read on.

\begin{itemize}
\item $f_4$ is read in the detour $(f_4,f_4)$: $f_3$ is read after $3s(f_4)=3z$
and $f_2$ after $r(f_4)-\ell(f_2)+2s(f_4)>4z$ so the cost exceeds $7z^3$.

\item $f_4$ is read in the detour $(f_3,f_4)$: $f_3$ is read after $s(f_4)=z$
and $f_2$ after $r(f_4)-\ell(f_2)+2s(f_4)>4z$ so the cost exceeds $5z^3$.

\item $f_4$ is read in the detour $(f_2,f_4)$: $f_3$ must be read in that same
detour as intertwined detours are forbidden. So $f_3$ is read after
$r(f_4)-\ell(f_2)+r(f_3)-\ell(f_2)>3z$ and $f_2$ after $r(f_4)-\ell(f_2)>2z$ so
the cost exceeds $5z^3$.

\item $f_4$ is read in the detour $(f_1,f_4)$: the cost associated to the
requests on $f_4$ exceeds $x(f_4)\cdot2\cdot\ell(f_2)=6z^3$. 
\end{itemize}

As $z$ grows, this shows that the approximation ratio of \simpledp is at least $5/3$.

\medskip

We now prove the second part of the lemma: for all values of $U$, the
approximation ratio of \simpledp is at most $3$. As noted above, this result is
already known for $U=0$, as the solution is at least as good as the one taking
all atomic detours.

Consider any instance of \ltsp and an optimal solution of cost $\opt$ described
by a list of strictly laminar intertwined detours $L$, such as the one returned
by \DP. We iteratively modify the solution $L$, reducing the portion of tape
witnessing intertwined detours while guaranteeing that the final cost does not
exceed $3\opt$. We again assume that the final detour $(f_1,f_{n_f})$ is not
explicitly present in $L$.

We say that a detour $(a,b)\in L$ is \emph{major} if there exists a detour
$(f_i,f_j)\in L$ such that $a<f_i\leq f_j<b$. Any such detour $(f_i,f_j)$ is
said to be \emph{inside} $(a,b)$. Among the major detours of $L$, consider the
one with the rightmost right endpoint. Let this detour be $(a,b)$. Then, among
the detours inside $(a,b)$, consider the one with the rightmost left endpoint.
Let this detour be $(c,d)$. We then have $a<c\leq d < b$. 

We can then split the schedule induced by $L$ into three time periods. First,
the files on the right of $b$ are read using non-major detours or skipped until
the final detour $(f_1,f_{n_f})$. Then, the files located between $a$ and $b$
are all read: the first one to be read is $c$ by definition and the last one is
$b$. Then, the files on the left of $a$ are read, and finally the remaining ones
on the right of $b$ are read.

We modify $L$ as follows: the detour $(a,b)$ is replaced by $(a,\leftb{c})$ and
the detour $(c,d)$ is replaced by $(c,b)$, where $\leftb{c}$ represents
the closest requested file located at the left of $c$. The consequences are the
following:
\begin{itemize}
\item files in $[c,d]$ are read at the same time as the original solution. 

\item files in $[d,b]\setminus\{d\}$ are read sooner as part of the detour $(c,b)$.

\item files read after $a$ in the original solution are read sooner as the
number of detours did not change but the distance traversed decreased.

\item for files in $[a,\leftb{c}]$, the reading head now performs the detour
$(c,b)$ instead of $(c,d)$ before reading them. This incurs an additional time
of $2(r(b)-r(d))$.

\item there is no major detour going over the file $c$ or a file on its right.
\end{itemize}

A simple upper bound is that the cost increases by at most $\nleft{c}\cdot
2\cdot(r(b)-r(d))$, where $\nleft{c}$ represents the number of file requests
located on the left of $c$, excluding $c$.

Consider successive applications of this process until no major detour is left.
This is always possible as, after each step, the rightmost right endpoint of a
major detour is moved to the left. This leads to the following sequence of files
involved in the modified detours: $\{(a_i,c_i,d_i,b_i)\}_{i\in [1,n_d]}$. After
each application at step $i$, the new rightmost right endpoint of a major
detour, $b_{i+1}$, is located on the left of $c_i$, so of $d_i$. This means that
the intervals $\{[d_i,b_i]\}_{i\in[1,n_d]}$ are all pairwise disjoint. Therefore, the additional cost is at most:

$$\sum_{i=1}^{n_d} \nleft{c_i}\cdot 2\cdot(r(b_i)-r(d_i)) \leq 2\cdot  \sum_{j=1}^{n_f} x(f_j) \cdot (m-r(f_j)) \leq 2\cdot \opt.$$

The first inequality comes from the fact that, for each file request, the union
of the relevant intervals $[b_i,d_i]$ represents a subset of the part of the
tape located on the right of this file.

Therefore, the final cost of the solution obtained, free of intertwined detours,
is at most $3\cdot\opt$, which proves the lemma.
\end{proof}

\section{Performance evaluation}
\label{sec:expe} 

In this section, we evaluate the performance, as the sum of service times of
its generated sequence of detours, of our exact algorithm, \DP, and its
suboptimal versions \simpledp and \logdp with a reduced complexity on a real-world dataset. We also compare  the
performance of these algorithms to existing ones~\cite{cardonha2018} (see
\Cref{sec:prelim}). 
Aiming for reproducibility, the source code used in this section\footnote{\url{https://figshare.com/s/80cee4b7497d004dbc70}} and the dataset\footnote{\url{https://figshare.com/s/a77d6b2687ab69416557}} are freely available online.

\subsection{Evaluated algorithms}
\label{sec:heur}
We consider \simpledp and two variants of \logdp with different values of the \logparam
parameter, 1 and 5, that we denoted by \logdpone and \logdpfive. Then, we
adapted the \fgs, \nfgs, and \lognfgs algorithms from~\cite{cardonha2018} to
take U-turn penalties into account. We further modified \nfgs on three points
which we believe were intended by the original authors as otherwise \nfgs may
not be as good as \fgs, a property which was claimed in the paper. Details
concerning our implementation can be found in Appendix~\ref{app:algos} and in
the source code. All these algorithms were implemented in a single-thread Python
program.

For each tape, each algorithm needs the following inputs:
\begin{itemize}
\item an ordered list of indices of the files requested on the tape
\item the number of requests for each requested file
\item the size of all files on the tape
\item the cost of the U-turn penalty
\end{itemize}
The output of an algorithm is a list of detours where a detour is a couple
$(a,b)$ which means that the head goes to the left of file $a$ then to the right
of file $b\ge a$. A value of $a=0$ corresponds to the leftmost requested file on
the tape. Then, we compute the sum of service times for each file request
following the sequence of detours given by each algorithm.

\subsection{Inputs from production logs}
\label{sec:input}

The \CC, from which our dataset comes, uses tape storage for long-term projects
in High Energy Physics and Astroparticles physics. Its tape library is
currently composed of 48 TS1160 drives and can store up to 6,700 20TB IBM
Jaguar E tapes.

The raw dataset covers two weeks of activity. It contains millions of lines of
reading, writing, and update requests with their associated timestamp. 
We applied several filtering steps to obtain the inputs needed by the
algorithms. We restricted to reading requests, and
selected a set of 169 tapes of interest storing $3,387,669$ files. Each tape is
divided into segments whose size and number depend on the tape.  In a segment,
files and \emph{aggregates} of files are described by several features such as
a position and a size. An aggregate is a batch of related files that can be
written sequentially.  A segment contains an aggregate if there is more than
one file referenced in this segment. Within an aggregate, the position of a
file is described a couple (position,
offset) %
where the position corresponds to the beginning of the aggregate, thus the
beginning of a segment, and the offset is the relative position of the file
within the aggregate. Note that an aggregate can span across several
segments. We discarded such aggregates and their associated requests to focus
on aggregates lying on a single segment. Reading files inside an aggregate is
not straightforward and generates a non-negligible overhead as the head
is required to go to the start of the aggregate before reading a file.

Finally, we decided to consider that requesting a file within an aggregate will
be treated as a request to read the whole aggregate. While this simplifies log
filtering process, this assumption also corresponds to a common optimization
strategy. Read aggregates are stored on disks when a file it contains is read
for the first time. Then, all the subsequent accesses to files in this
aggregate will avoid the large delays induced by tapes and benefit of the
smaller latency of disks. Consequently, we replace all the file requests in
a given aggregate by a single request for a file of the size of
this aggregate. Then we associate to this file a number of requests equal to
the number of requested files in that aggregate.

To summarize, the processed dataset corresponds to a total of $119,877$ files
stored on the 169 tapes. We provide more details and statistics on this dataset
in Appendix~\ref{sec:repro}. To the best of our knowledge, this is the first time
that a  realistic dataset for magnetic tape storage is made publicly
avaible. In the context of the evaluation of the considered algorithms, this
dataset corresponds to 169 distinct instances of \ltsp to solve.

\subsection{Simulation results}
\label{sec:results}

The evaluations presented in this section have been performed on a single
server with two Intel Xeon Gold 6130 CPUs with 16 cores each.  To compare the
performance of the different algorithms, we use the generic \emph{performance
  profile} tool~\cite{perfprofile}. We compute the cost of each algorithm on
each instance of the dataset, normalize it by the optimal (\DP), and report an
empirical cumulative distribution function. For a given algorithm and an
overhead $\tau$ expressed in percentage, we compute the fraction of instances
for which the algorithm has a cost at most $(1+\tau)\cdot\mathit{cost}(\DP)$,
and plot these results. Therefore, the higher the curve, the better the method.
For instance, for an overhead of $\tau=10\%$, the performance profile shows how
often the performance of a given algorithm lies within $10\%$ of the optimal
solution.

We evaluate the algorithms on each of the 169 instances for three different values of the U-turn penalty \uturn: 
(i) no penalty
(ii) a penalty equals to half of the average size of a segment in the 169 considered tapes, and 
(iii) a penalty equivalent to the average size of a segment.
While we have not yet modeled seeking and reading speeds of the head, such penalties whose values are extracted from 
features of the input instances are useful to evaluate the impact of increasing
\uturn on the  performance of the algorithms.

\paragraph{Algorithms Performance}

Figure~\ref{fig:U0} shows the performance profiles of the algorithms without
U-turn penalty.  As expected, \gs and \nodetour show poor performance, with an
overhead of more than 10\% for \nodetour over 60\% of the instances. The \fgs,
\nfgs, and \lognfgs heuristics exhibit very similar performance, with an
overhead of less than 2.5\% over 80\% of the test cases. Both variants of \logdp
heuristic slightly outperform the other heuristics, and \simpledp is the best
solution by a greater margin. As expected, the higher \logparam, the closer to
optimal the solution is. \nfgs is better than $\logdp(1)$ on $11\%$ on the
instances, and worse in $85\%$. It performs better when a single long detour is
largely beneficial, and out of reach of $\logdp$. \nfgs is slightly better than
\simpledp on $<4\%$ of the instances, where a large intertwined detour is more
beneficial.

\begin{figure}[hbtp]
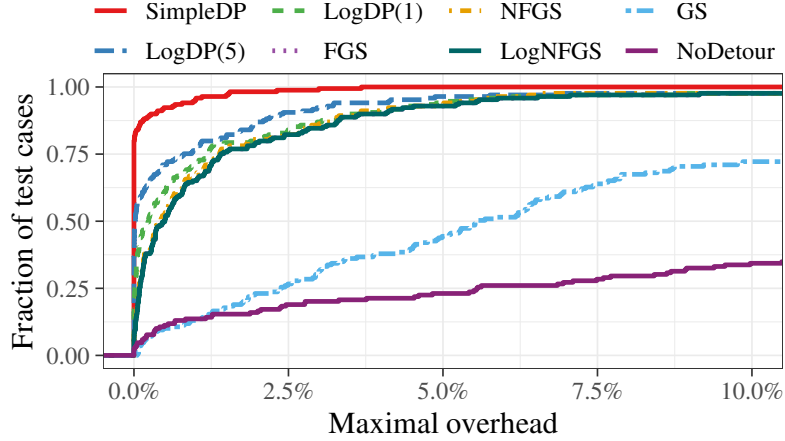

\centering
\resizebox{.7\linewidth}{!}{

 }
\caption{Performance of the different algorithms, when $\uturn = 0$.}
\label{fig:U0}
\end{figure}

Figure~\ref{fig:U1} illustrates the algorithms performance with a U-turn penalty
equal to the average size of a segment.  We see that $U$ increases the
discrepancy between the \fgs-like heuristics  and \logdp and \simpledp. 
Here, these heuristics cause at least 5\% more overhead on half of the instances
than \logdpone, and 10\% more overhead than \simpledp.  The
suboptimal solutions of \DP variants are more robust to the increase of
\uturn, with an overhead of less than $1\%$ for \simpledp when compared to \DP
for 97\% of the inputs. Similar trends can be observed with a halved value of $U$ on \Cref{fig:U.5}.

\begin{figure}[htb]
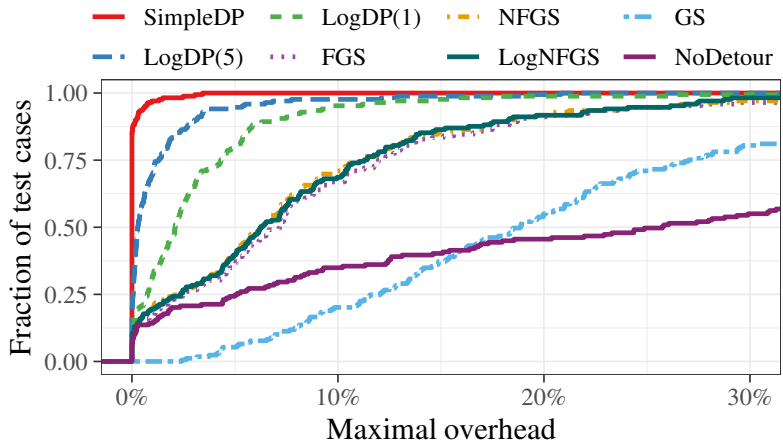

\centering
\resizebox{.7\linewidth}{!}{

 }
\caption{Performance of the different algorithms, when $\uturn$ is equal to half the average segment size.}
\label{fig:U.5}
\end{figure}

\paragraph{Time to solution} The median running times for the algorithms \DP,
{\logdp}(5), \simpledp, {\logdp}(1), \nfgs and \lognfgs are around 281, 47, 21, 5, $0.4$ and
$0.1$ seconds respectively. The other algorithms have insignificant running
times (\mbox{$<\!1$ms}). However, our single-thread Python implementation was not designed
with performance in mind. Estimations based solely on the documented maximum
speed of the reading head leads to an average duration of $500$s to schedule the
requests on one tape of the dataset with an average service time of $80$s. The
observed gains thus have to be nuanced by the required computing times of the
algorithms. It should also be noticed that the schedule
computation can be done in parallel to robot operations mounting the tape, so
the start of the schedule is not directly delayed by the computation time. The
characteristics of the data set (a median $n>2,600$ much larger than
$\nfiles<150$) also explain the longer running times of \DP variants as the \fgs-like
algorithms complexity does not depend on $n$, see more details in the
supplementary material. The $\lambda$ parameter can be used to obtain a faster
version of $\logdp$ at the cost of lower performance. On large inputs (\ie list
of requested files greater than 100), the cost of \DP becomes prohibitive in a
production context, making \logdp variants good replacement candidates.

\section{Conclusion}

In this article we studied the Linear Tape Scheduling Problem, aiming at
minimizing the average service time for read requests on a linear magnetic
tape. We proposed an exact polynomial-time dynamic programming algorithm,
solving this problem whose complexity was open until now. Then, we derived a
low-cost suboptimal algorithm, whose performance outperforms existing heuristics on a
realistic dataset extracted from the tape library logs of the \CC, a dataset we make publicly available.

This dataset could also be used for related problems such as $k$-server on the
line for which few relevant datasets are available~\cite{lindermayr2021double}.
The remaining question on the theoretical side of \ltsp resides in the possible
improvements in the running time of an exact algorithm. Notably, as discussed in
\Cref{sec:framework}, the input of \ltsp is defined as a list of requests,
possibly on duplicate files. If the number of requests is not bounded by a
polynomial in the number of requested files, this is not the best representation
of the input. It would be more compact to define the input as a set of requested
files associated with the number of requests on each file. The algorithms \DP,
\logdp and \simpledp would then be only pseudo-polynomial in this setting as
they are not polynomial in $\log n$. Therefore, the complexity of this problem
is still open. Another interesting question resides in the determination of the
approximation ratio of \simpledp, which belongs in $[5/3,3]$ for any value of
$U$. In other words, the question is to determine the exact gain of using
intertwined detours. The obvious generalization of the problem would be to
consider the two-dimensional tape geometry, but we expect that such a model
would quickly become intractable. We also discuss below how \DP can be adapted
to handle two minor extensions: arbitrary starting position of the head and a
different reading speed.

\paragraph*{Arbitrary starting position.}
The starting position of the reading head could be chosen at an arbitrary
position $X$ and the algorithm \DP can be adapted to find the optimal solution:
simply prevent any detour to start on the right of $X$. Indeed, this emulates a
schedule in which the head initially moves from the rightmost file to $X$. No
detour starting on the right of $X$ would ever be needed later thanks to
\Cref{lem:laminar}.

\paragraph*{Different reading speed.}
We do not differentiate seeking speed, where the tape is required to move to a
specific location, and reading speed, where data is actually output. The model
could be tuned to accept such two different speeds, but we chose to keep it
simpler by using a unique speed. This choice is motivated by the observation
that reading times are much smaller than seeking times in the tapes operated in
the studied computing center. \DP could be easily transformed to account for such
different speeds. The only limitation being that \DP would require to
read each file the first time it is traversed from left to right, which means
that the solution returned would not be optimal on adversarial inputs requiring
multiple back-and-forth seeks over a file before reading it.

\section*{Acknowledgments}
We thank Pierre-Emmanuel Brinette for fruitful discussions.
Experiments presented in this paper were carried out using the Grid'5000
testbed, supported by a scientific interest group hosted by Inria and including
CNRS, RENATER and several Universities as well as other organizations (see
\url{https://www.grid5000.fr}).

\bibliographystyle{plainnat}
\bibliography{biblio}

\renewcommand{\addfile}[3][f_\thefile]{
\addtocounter{file}{1}
\fill[blue!20!white] (#2,45) rectangle (#3,-315);
\node[blue!50!white] at ({(#3+#2)*0.5}, -355) {$#1$};
}

\appendix

\section{Relationship with the concurrent work~\cite{cardonha21}}

\label{app:cardonha21}

Concurrently to this study, Cardonha, Ciré and Real~\cite{cardonha21} achieved similar
results on the same Linear Tape Scheduling Problem. They also provide a
polynomial-time quartic algorithm based on dynamic programming which resolves
the complexity status of the problem. However, our results differ in several points:
\begin{itemize}
\item their model considers a single request per file,
\item we introduced the U-turn penalty $U$ to account for mechanical deceleration,
\item their dynamic programming formulation relies on two inter-connected tables whereas our algorithm uses a single table,
\item they propose different heuristics: an approximate variant of the dynamic programming with lower constant factors and a greedy heuristic to exchange some files read order based on their size,
\item they compare heuristic performances on synthetic data to determine if some parameters used in the instances generation influences the results,
\item the realistic dataset they use has a very low variance per file and one request per file. This means that any heuristic based on Greedy Scheduling is optimal~\cite{cardonha2018}. The dataset we use presents a broad spectrum of file size variance and number of requests per file.
\end{itemize}

\newcommand{\mc}[1]{\ensuremath{\mathcal #1}\xspace}
\newcommand{\files}{\ensuremath{\mc F}\xspace}
\newcommand{\requests}{\ensuremath{\mc R}\xspace}
\newcommand{\tape}{\ensuremath{\mc T}\xspace}
\newcommand{\detours}{\ensuremath{\mc L}\xspace}
\newcommand{\resu}{\ensuremath{\mathit{res}}\xspace}
\newcommand{\tempo}{\ensuremath{\mathit{temp}}\xspace}
\newcommand{\wasdetour}{\ensuremath{\mathit{WasADetour}}\xspace}
\newcommand{\rightestdetour}{\ensuremath{\mathit{RightestDetour}}\xspace}
\newcommand{\cost}{\ensuremath{\mathit{cost}}\xspace}

\section{Precise description of the algorithms adapted from \cite{cardonha2018}}
\label{app:algos}

Each algorithm considered in this section takes the following inputs:
\begin{itemize}
\item an ordered list \files of indices of the files requested on the tape,
\item the number of requests \requests for each requested file,
\item the size of all files on the tape \tape,
\item the cost of the U-turn penalty \uturn.
\end{itemize}
The output of an algorithm is a list of detours where a detour is a couple
$(a,b)$ which means that the reading head goes to the left of file $a$ then to
the right of file $b\ge a$. A value of $a=0$ corresponds to the leftmost requested file on
the tape.

We adapted \fgs, \nfgs and \lognfgs from~\cite{cardonha2018} to take into
account U-turn penalties. We also modified \nfgs on three points which we
believe were intended by the original authors as otherwise \nfgs may not be as
good as \fgs, a property which was claimed in the paper.

The pseudo-code depicted in this section is rather high-level, referring to
mathematical inequalities without expliciting how to maintain each term. We
explain the time complexity of our implementation and the low-level details can
be checked directly in the source code.

\subsection{Restating structural results}

Before describing the algorithms, we need some preliminary definitions and
results, on which the algorithms rely.

We say that a file $f$ belongs to list of detours $\detours$ if and only if it
is part of a detour of \detours:
$$ f\in\detours ~\Leftrightarrow~ \exists (a,b)\in \detours~|~ a\leq f\leq b.$$
We assume that the tape starts at a requested file on its left to simplify the
formulas: the reading head will have to go to the position 0, so at a distance
$\ell(f)$ from the left of any file $f$ (this assumption allows to drop additive $-\ell(f_1)$ terms).

The first result will be used by the algorithm \fgs.

\begin{lemma}
Let \detours be a list of single-file detours $(f_i,f_i)$ and $f$ be a file such
that $(f,f)\in\detours$. Then, $\cost(\detours\setminus\{(f,f)\})<\cost(\detours)$ if and only if:
\begin{equation}
\label{eq:fgs}
2\cdot x(f)\cdot\left(\ell(f) + \sum_{ g<f~|~g\in\detours}\left(s(g)+\uturn\right)\right)
<
2\cdot (s(f)+\uturn)\cdot\left(\sum_{g<f}x(g) + \sum_{g>f ~|~ g\notin\detours} x(g)\right).
\end{equation}
\end{lemma}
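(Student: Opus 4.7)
The plan is to compute $\cost(\detours) - \cost(\detours \setminus \{(f,f)\})$ directly by tracking how each request's service time shifts when the detour $(f,f)$ is deleted. Since all detours in \detours are single-file, by \Cref{lem:laminar} they are executed in order from rightmost to leftmost while the head moves left from $m$, each detour contributing an additive cost of $2s(g)+2\uturn$ (two U-turns plus a round trip across $g$) to every event that occurs after it.

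I would partition the requested files into four disjoint classes relative to $f$: (A) files $h<f$ (read on the return trip after all leftward detours); (B) files $h>f$ with $h\in\detours$ (served by their own detour, which occurs strictly before $f$'s); (C) the file $f$ itself; (D) files $h>f$ with $h\notin\detours$ (skipped, served on the return trip). For (A) and (D), the service time is obtained by summing the initial rightward-to-leftmost traversal, all detours, the terminal U-turn at $\ell(f_1)$, and the rightward travel to $r(h)$; deleting $(f,f)$ removes exactly the term $2s(f)+2\uturn$, so each such request saves $2s(f)+2\uturn$. For (B), the detour on $h$ precedes the one on $f$, so its service time is unaffected. The only delicate case is (C): in \detours the service time of $f$ equals the travel to $\ell(f)$ plus the detours on files $g>f$ in \detours plus $\uturn+s(f)$, while in $\detours\setminus\{(f,f)\}$ it equals the full leftward trajectory (including detours on $g<f$) plus $\uturn+r(f)$; subtracting yields an increase of exactly $2\bigl(\ell(f)+\sum_{g<f,\,g\in\detours}(s(g)+\uturn)\bigr)$ per request on $f$.

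Combining these four contributions, $\cost(\detours\setminus\{(f,f)\})<\cost(\detours)$ is equivalent to
\[
2x(f)\Bigl(\ell(f)+\!\!\sum_{g<f,\,g\in\detours}\!\!(s(g)+\uturn)\Bigr)
<
(2s(f)+2\uturn)\Bigl(\sum_{h<f}x(h)+\!\!\sum_{h>f,\,h\notin\detours}\!\!x(h)\Bigr),
\]
which is exactly~\eqref{eq:fgs}. The only real obstacle is bookkeeping in case (C)—ensuring that the U-turn penalties at $\ell(f_1)$ and at the endpoints of $f$'s former detour are each counted once and that no hidden $\ell(f_1)$ term remains (this is where the normalization $\ell(f_1)=0$ stated just above the lemma is used). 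All other cases reduce to the single observation that deleting a single-file detour shifts every later event uniformly by $2s(f)+2\uturn$.
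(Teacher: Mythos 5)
Your argument is correct and follows essentially the same cost-difference accounting as the paper's proof: the left-hand side is the increase in $f$'s own service time when its detour is removed (it must now travel to $\ell(f_1)=0$ through all detours on files left of $f$ and back to $r(f)$), and the right-hand side is the savings of $2(s(f)+\uturn)$ per request on files left of $f$ or skipped right of $f$. The paper states this more tersely (deferring the $\uturn=0$ case to a prior reference), but your four-class breakdown makes the same calculation explicit.
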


\begin{proof}
This equation with $\uturn=0$ corresponds to Corollary 4 in \cite{cardonha2018}.

The left-hand side equals the delay added to the service time of $f$: for each
request of $f$, the reading head has to go the left of the tape ($2\ell(f)$) and
through all the detours $(g,g)\in\detours$ on the left of $f$, where each
detour adds a delay of $2(s(g)+\uturn)$.

The right-hand side corresponds to the delay added to all other files than $f$
by performing a detour of duration $2(s(f)+\uturn)$ to serve $f$. The impacted
files are the ones at the left of $f$ and the skipped files.
\end{proof}

We now define the function $\Delta$ required by the algorithm \nfgs.

\begin{definition}
Let \detours be a list of detours and $(a,b)$ be a detour such that no detour in
\detours starts on $a$. We define:
\begin{align*}
\Delta(\detours, (a,b)) ~=~ 
&2\cdot(r(b)-\ell(a)+\uturn)\cdot\left(\sum_{f<a}x(f) + \sum_{f>b~|~f\notin\detours}x(f)\right)\\
&~~-
2\sum_{f\in [a,b] ~|~ f\notin\detours} x(f)\cdot\left(\ell(a)+\sum_{(f',\,g')\in\detours~|~f'<a}(r(g')-\ell(f')+\uturn)\right).
\end{align*}
\end{definition}
\smallskip

This definition corresponds to Equation~4 in~\cite{cardonha2018}. The idea,
similarly to \Cref{eq:fgs}, was to represent the difference between
$\cost(\detours\cup \{(a,b)\})$ and $\cost(\detours)$. We will show below that
it actually only represents an upper bound on this difference. Assume first that
$(a,b)$ does not intersect with a detour of $\detours$ starting on the left of
$a$. The first term corresponds to the right-hand-side of \Cref{eq:fgs} and
equals the delay added to pending files when executing the detour $(a,b)$. The
second term represents the reduction on the service time of the files in $(a,b)$
which were skipped in \detours: the time to go from $\ell(a)$ to the left of the
tape and come back, including all subsequent detours. So, in this case, it
indeed represents the intended difference.

The last sum of the definition of $\Delta$ was indexed by $f'<f$ instead of
$f'<a$ in the last line of Equation~4 of~\cite{cardonha2018}, but not on the
previous steps. Having an index $f'<f$ here would lead to an erroneously smaller
value of $\Delta$ as every detour located between $a$ and $f$ would lead to a
diminution of the value of $\Delta$, while such detours impact the service time of $f$
in the exact same way in both $\detours\cup \{(a,b)\}$ and $\detours$.

Now, assume there exists a detour $(a_1,b_1)$ in \detours such that $a_1<a$ and
$b<b_1$. Then we must have $\Delta\geq0$ as no file $f$ can be in $[a,b]$ but
not in $\detours$. Therefore, $\Delta$ does not model accurately this case,
remark which contradicts the claim in~\cite{cardonha2018} that
$\Delta(\detours,(a,b)) = \cost(\detours\cup \{(a,b)\})-\cost(\detours)$. This
fact will require to correct the algorithm \nfgs, as it relied on it to exhibit
an approximation factor of 3.

\subsection{Greedy Scheduling (\gs)}

The first algorithm proposed by \cite{cardonha2018} is named \gs for greedy
scheduling. It returns a list of all detours $(f,f)$ such that $f$ is a
requested file. It is shown to be a 3-approximation when $\uturn = 0$. Its time
complexity is $O(\nfiles)$.

\begin{algorithm}[htb]
\caption{Greedy Scheduling (\gs)}
\label{alg:gs}
\textbf{Input}: \files, \requests, \tape, \uturn\\
\textbf{Output}: A list of detours
\begin{algorithmic}[1] %
\STATE Let $\resu=\emptyset$.
\FOR{$f \in \files$}
\STATE Append $(f,f)$ to $\resu$
\ENDFOR
\STATE \textbf{return} $\resu$
\end{algorithmic}
\end{algorithm}

\subsection{Filtered Greedy Scheduling (\fgs)}

The next algorithm, \fgs, is an improvement over \gs by filtering out
detrimental detours. Such detours are determined using \Cref{eq:fgs}. As
removing a detour may lead to another detour becoming detrimental, this
subroutine is run $\nfiles$ times, for a time complexity in $O(\nfiles^2)$ as the
terms needed to evaluate \Cref{eq:fgs} can be maintained in constant time per
iteration.

\begin{algorithm}[htb]
\caption{Filtered Greedy Scheduling (\fgs)}
\label{alg:fgs}
\textbf{Input}: \files, \requests, \tape, \uturn\\
\textbf{Output}: A list of detours
\begin{algorithmic}[1] %
\STATE Let $\resu=\gs(\files,\requests,\tape,\uturn)$.
\FOR{$\_ \in \files$}
	\FOR{$(f,f) \in \resu$}
	\IF{\Cref{eq:fgs} is true}
	\STATE Remove $(f,f)$ from $\resu$
	\ENDIF
	\ENDFOR
\ENDFOR
\STATE \textbf{return} $\resu$
\end{algorithmic}
\end{algorithm}

\subsection{Non-Atomic Filtered Greedy Scheduling (\nfgs)}

The next algorithm, \nfgs~\cite{cardonha2018}, is an improvement over \fgs by replacing some
unique-file detours by more beneficial multi-files detours. Therefore, it is
claimed to also offer an approximation ratio of 3 when $\uturn=0$ as its cost
should be lower than $\gs$.

On top of the small correction on $\Delta$ described before, we also modified
the algorithm in Line~\ref{ln:argmin} and added Lines~\ref{ln:wasdetour},
\ref{ln:startif}-\ref{ln:endif}, and \ref{ln:rightdetour} in order to avoid cases in which
the cost of \fgs becomes larger than the one of \gs.

First, Line~\ref{ln:argmin}, we replaced $\arg\min_{f'> f}$ by $\arg\min_{f'\geq
f}$ as, otherwise, unique-file detours cannot be kept which increases the final
cost compared to \gs.

Then, the second issue is related to the false claim about $\Delta$. As, when
$f$ is part of a detour started on the left, the value of $\Delta$ is never
negative (and almost always positive), beneficial detours part of a longer
detour cannot be kept by the original algorithm, which increases the final cost
compared to \gs. Therefore, the added lines recognize this case and never remove
such a detour $(f,f)$ by overwriting the value of $f^*$.

This algorithm has a time complexity of $O(\nfiles^3)$, dominated by the
$O(\nfiles^2)$ evaluations of $\Delta$ which requires $O(\nfiles)$ time to be
computed.

\begin{algorithm}[htb]
\caption{Non-atomic Filtered Greedy Scheduling (\nfgs)}
\label{alg:nfgs}
\textbf{Input}: \files, \requests, \tape, \uturn\\
\textbf{Output}: A list of detours
\begin{algorithmic}[1] %
\STATE Let $\resu=\fgs(\files,\requests,\tape,\uturn)$.
\STATE Let $\rightestdetour = 0$
\FOR{$f \in \files$}
	\STATE Let $\wasdetour= \mathit{True} \textbf{ if } (f,f)\in \resu \textbf{ else } \mathit{False}$ \label{ln:wasdetour}
	\STATE Let $\tempo= \resu \setminus \{(f,f)\}$ 
    \STATE Let $f^*= \arg\min_{f'\geq f}~ (\Delta(\tempo, (f,f')))$ \label{ln:argmin}

	\IF{$\Delta(\tempo,(f,f^*))\geq 0$ and \wasdetour and $\rightestdetour>f$} \label{ln:startif}
		\STATE $f^* = f$
	\ENDIF \label{ln:endif}

	\IF{$\Delta(\tempo,(f,f^*))<0$}
		\STATE Add $(f,f^*)$ to \resu
		\STATE $\rightestdetour = \max(\rightestdetour,f^*)$ \label{ln:rightdetour}
	\ENDIF
    
\ENDFOR
\STATE \textbf{return} $\resu$
\end{algorithmic}
\end{algorithm}

\subsection{Logarithmic Non-Atomic Filtered Greedy Scheduling (\lognfgs)}

The last algorithm we present in this document is a restriction of \nfgs where
the detour lengths are bounded by $\lambda\cdot\log\nfiles$ requested files.
The original algorithm~\cite{cardonha2018} was written with a value of $\lambda=1$ but we add this
parameter for a fair comparison with \logdp.  In the experiments, we
use a parameter of $5$ as our dataset presents values of \nfiles
smaller than in the dataset used in~\cite{cardonha2018}.
Its time complexity is $O(\nfiles^2 \log \nfiles)$.

\begin{algorithm}[htb]
\caption{Logarithmic Non-atomic Filtered Greedy Scheduling (\lognfgs)}
\label{alg:lnfgs}
\textbf{Input}: \files, \requests, \tape, \uturn\\
\textbf{Parameters}: $\lambda$\\
\textbf{Output}: A list of detours
\begin{algorithmic}[1] %
\STATE Let $\resu=\fgs(\files,\requests,\tape,\uturn)$.
\STATE Let $\rightestdetour = 0$
\FOR{$f \in \files$}
	\STATE Let $\wasdetour= \mathit{True} \textbf{ if } (f,f)\in \resu \textbf{ else } \mathit{False}$ 
	\STATE Let $\tempo= \resu \setminus \{(f,f)\}$ 
    \STATE Let $f^*= \arg\min_{f'\geq f \text{ and } f'\leq f + \lambda\log \nfiles}~ (\Delta(\tempo, (f,f')))$ 

	\IF{$\Delta(\tempo,(f,f^*))\geq 0$ and \wasdetour and $\rightestdetour>f$} 
		\STATE $f^* = f$
	\ENDIF

	\IF{$\Delta(\tempo,(f,f^*))<0$}
		\STATE Add $(f,f^*)$ to \resu
		\STATE $\rightestdetour = \max(\rightestdetour,f^*)$ 
	\ENDIF
    
\ENDFOR
\STATE \textbf{return} $\resu$
\end{algorithmic}
\end{algorithm}

\clearpage

\section{Reproducibility artifact and dataset}

This section is dedicated to the reproducibility of the performance evaluation results presented in \Cref{sec:expe}. 
Section~\ref{sec:dataset} describes a dataset of reading requests on 169 tapes, associated to the 
description of all the files on these tapes. This dataset is available at {\url{https://figshare.com/s/a77d6b2687ab69416557}}. The data are extracted from real logs of a leading computing facility and is, to the best of our knowledge, the first one of its kind publicly available.
Section~\ref{sec:repro} contains all the necessary material to reproduce the simulation results presented in \Cref{sec:expe}.
This material is available in a reproducibility artifact freely accessible at {\url{https://figshare.com/s/80cee4b7497d004dbc70}}. It contains all  the instructions regarding the execution of the simulation code, the output data of the different experiments, and the scripts to generate the figures.

\subsection{A public dataset of magnetic tape file description and reading requests}
\label{sec:dataset}

In this section, we introduce a dataset containing the position and size of files on magnetic tapes, associated to user reading requests on these tapes from
a production system. The dataset is freely accessible online using the following link: {\url{https://figshare.com/s/a77d6b2687ab69416557}}.

\subsubsection{Context}
The \CC, from which our dataset is extracted, uses tape storage for long-term projects in the fields of High Energy Physics and Astroparticles Physics.
In this context, we had access to logs of the tape system from a period of high activity.
The center uses the Spectra Tfinity library, and has 48 reading engines TS1160 with 6700 Jaguar E magnetic tapes with a capacity of 20TB each.

The raw dataset covers three weeks of activity. It contains millions of lines of
reading, writing, and update requests with their associated timestamp. %
It also
details positioning operations and delays for the device heads. For obvious privacy issues, we cannot make the whole raw dataset public, but only some anonymized features.

In this work, we were interested in getting a description of magnetic tapes (position and size of files on magnetic tapes),
associated to user reading requests on these files.
The former knowledge is accessed through description files of the tapes, given by the system.
The latter is obtained from the raw logs, after several steps of filtering. 

We first removed all lines from the raw dataset that do not concern reading
operations. This gives us a list of 169 tapes, covering a total of $3,387,669$ files.
Each tape is divided into segments containing files or \emph{aggregates} of
files. The size and number of segments depend on the tape. In a segment, the files are
described by several features such as position and size. The current setup in
the computing center allows to write \emph{aggregates} of files on the
tapes, \ie a batch of related files that can be written sequentially. A
segment contains an aggregate if there is more than one file referenced in this
segment. Within an aggregate, the position of a file is given as a couple
(position,offset) here the position is actually the beginning of the aggregate,
thus the beginning of a segment. Note that an aggregate can span across several
segments. We discarded such aggregates and their associated requests to focus
on aggregates lying on a single segment. Reading files inside an aggregate is
not straightforward and generates a non-negligible overhead as the head is
required to go to the start of the aggregate before reading a file. To ease the
extraction of our sequences of requests, we considered that a requested file
inside an aggregate will be treated as a request to read the whole aggregate.
Such a behavior actually represents a strategy of buffering when aggregates are
stored on disks after a file is requested within, in order to avoid the costly
operations of accessing a file in aggregates. Thus, all the file requests in
the same aggregate are replaced by a single request for a file of the size of
this aggregate, and we associate to this file a number of requests equal to the
number of files in the aggregate.

Overall, the final processing of the logs gives us 169 tapes with a total of $119,708$ files stored on it after the filtering of aggregates, according to the tape description 
files of the system at the considered period in the logs. 
The exploitation of the system logs allowed us to extract $28,853$ unique file requests on these tapes, and a total of $615,324$ user requests over these files. 

This dataset is, to the best of our knowledge, the first publicly available dataset on magnetic tape storage.
In the next paragraphs, we describe the different files of the dataset.

\subsubsection{Characteristics of the dataset}

We provide in this section some statistics about the main characteristics of the
dataset, to illustrate the diversity of the represented instances
(tapes and associated  requests).

\begin{table}[h]
\renewcommand{\arraystretch}{1.25}
\centering

}
\caption{Illustration of the tape dataset with the number of unique requested files in each tape in function of the total number of user requests in it.}
\label{fig:tape_req}
\end{figure}

\paragraph{Statistics on the number of files and requests.}
Table~\ref{tab:sum} gives a brief summary of the dataset in terms of tape size and number of requests. 
There is a large variety of tape sizes, from hundreds to thousands of files. The same observation stands for the number of files requested and the total number of requests on those files.
Figure~\ref{fig:tape_file} represents the distribution of unique files requested in function of the size of the tapes. Most tapes consist of less than a thousand of files and have at most 300 unique files requested, and there is no strong visible correlation between these parameters, which ensures the diversity of the dataset.
We display in Figure~\ref{fig:tape_req}, for each tape, the distribution of the total number of user requests with the number of unique files requested. 
We also observe that the total number of user requests is varied even among tapes having a very similar number of unique files requested.

\paragraph{Statistics on the sizes of the files.}

We now focus on the distribution of file sizes among the tapes. \Cref{tab:sizes}
first shows the statistical summary of the average file size in a tape, ranging
from 5 to 167GB with an average of 50GB. This information is slightly redundant
as usually proportional to $1/n_f$, most tapes being full and of the same
capacity. The important information provided here concerns the coefficient of
variation of the file sizes in each tape ({\it i.e.,} the standard deviation
over the average file size in a tape, expressed as a percentage). We can see that
many tapes present varied file sizes, as the median coefficient of variation
equals 56\% and the average is 94\%. This corresponds to more difficult instances
of the targeted problem, as greedy solutions are sufficient to solve the problem
with a variance of 0 and no request multiplicity. \Cref{fig:tape_sd} shows the
relation between the mean file size and the coefficient of variation: a larger
mean file size (hence a smaller $n_f$) is related to lower coefficients of
variation, but again there is no direct dependency and a few clusters can be
identified in this plot.

\bigskip
We therefore believe this dataset is heterogeneous and suitable for performance
evaluation of a magnetic tape storage system.

\begin{table}[htbp]

\renewcommand{\arraystretch}{1.25}
\centering

}
\caption{Illustration of the tape dataset with the file sizes coefficient of variation in each tape in function of the average file size of the tape.}
\label{fig:tape_sd}
\end{figure}

\subsubsection{Dataset content}

We now describe the content of the public folder.

\paragraph{`\textbf{list\_of\_tape.txt}'}
This file lists the name of the 169 tapes in the dataset. For each tape, there is a file listing all the user requests on this tape in the folder \textbf{requests}, and 
a file describing the content of the tape in the folder \textbf{tapes}.
The tapes are named under the format \textit{TAPEXXX.txt} where XXX varies from 001 to 169.

\paragraph{\textbf{requests} folder}
For each tape, this folder contains a request file with two columns
\verb|index| and \verb|nb_requests|. The former refers to the index of the requested file 
on the tape (see \textbf{tapes} folder) associated to the number of requests for this file. 
The maximum number of distinct files requested for one tape is equal
to 852, and the minimum number is 31. The median value is 148 unique
files (for a tape with 531 files), and the mean is 170.
Regarding the total number of user requests on one tape, the maximum
is 15,477 and the minimum is 1,182, for a median value of 2,669 files
and a mean of 3,640. 

\paragraph{\textbf{tapes} folder}
This folder contains a description file of each tape in the dataset. From the left (position 0) to the right of each tape, the file describes the different segments
of the tape.\\ It contains four columns \verb|id,cumulative_position,segment_size,index|. The \verb|id| column corresponds to the id number of the segment on the tape given by the system. The next two columns respectively refer to the cumulative position of the segment from the left of the tape, and its size. Finally, the \verb|index| column is used as the id of the file on the tape starting from 1 for the leftmost file. This fourth column is used to match the \verb|index| column of the \textbf{requests} files.
The largest tape contains 4,141 files, and the smallest one 111.
The median size is 489 files and the mean size is 708 files.

\subsubsection{Perspectives}

This dataset allowed us to evaluate several algorithms on realistic data extracted from the logs of a production computing center.
We expect this dataset to be a first step in the achievement of large-scale datasets of such types.
Logs from a larger time period can be envisioned as an extension to this dataset.

In this work, we only considered reading requests from users in the framework of the Linear Tape Scheduling Problem. However, the raw logs contains much more information
that one could expect to use. Knowledge about time processing of reading operations and positioning operations performed by the multiple device heads could be leveraged to better
model seeking speed and reading speed. A rapid overview of the logs
tends to show that the positioning time seems to impact the performance much more than the reading time. Hence, modeling 
the seeking speed of the device seems to be important to provide realistic cost models of the process.
Temporal aspects of the raw dataset could also be exploited for a usage in online problems, for instance.

\subsection{Reproducibility artifact}
\label{sec:repro}

This section provides all the details to reproduce the performance evaluation presented in \Cref{sec:expe}.
The complete artifact can be downloaded online: {\url{https://figshare.com/s/80cee4b7497d004dbc70}}.
\subsubsection{`\textbf{input}' folder}

This folder contains the data described in Section~\ref{sec:dataset}. The reader is invited to refer to this section for comprehensive details about the dataset used 
for the performance evaluation, and how it has been generated.
The folder \textbf{requests} contains the index of the files requested on a tape, associated to the number of requests of this file.
The folder \textbf{tape} describes the position and size of the files
on a tape. Both folders are used as input of the differnt algorithms
presented in the paper (see \textbf{code} folder).

\subsubsection{`\textbf{code}' folder}
This folder contains a Python implementation of our algorithms and of those adapted from~\cite{cardonha2018} used for baseline comparison.
We carefully implemented the different strategies in  the  \textit{algorithms.py} file.
The \textit{main.py} file is dedicated to the execution of all algorithms on all the instances of the \textbf{input} folder.
It directly parses the different files in the \textbf{input} folder to instantiate 4 different parameters of the algorithms:
\begin{itemize}
\item \textbf{files\_requested}: the list of requested files on the tape, comes from the \verb|index| column in the \textit{input/requests/TAPEXXX.txt} files.
\item \textbf{request\_numbers}: the number of requests of each file in the above list. Extracted from the \verb|nb_requests| column in the \textit{input/requests/TAPEXXX.txt} files.
\item \textbf{tape}: the list of all file sizes on the tape. Extracted from the \verb|segment_size| column of the \textit{input/tapes/TAPEXXX.txt} files/
\item \textbf{right}: a list of the right ordinate of each file in \textbf{tape}. Obtained by computing the cumulative sum of the \textbf{tape} parameter
\end{itemize}

We also provide in the \textit{draw.py} file a visualization tool of the device head trajectory depending on the list of detours produced by the algorithms.
This tool is automatically called in \textit{main.py} for each input
and algorithm pair.

To start the performance evaluation, one should just go into the \textbf{code} folder, and start the program using the makefile:

\begin{lstlisting}[language=bash,frame=single]
cd code ; make
\end{lstlisting}

It requires to have  \verb+python3+ installed on the machine. It can easily be installed on any Ubuntu/Debian machine using the following command
\begin{lstlisting}[language=bash,frame=single]
sudo apt-get install python3
\end{lstlisting}

The performance evaluation in \Cref{sec:expe} uses Python3 version 3.9.2.
The code has been executed on a compute node with two Intel Xeon Gold 6130 CPUs with 16 cores each.
The execution of the algorithms has been performed sequentially on a single core of a dedicated node to avoid external disturbances. 

\subsubsection{`Run' folder}

This folder contains the performance results of the different strategies evaluated in \Cref{sec:expe} of the paper.
For each algorithm, we recorded the cost induced by the list of detours in output and the simulation time to get the solution.
We tested three different values of the U-turn penalty, that is a parameter:
\begin{itemize}
\item 0: no penalty
\item 14,254,750,000: it represents half of the average size of a tape segment according to our 169 input tapes.
\item 28,509,500,000: it represents the average size of a tape segment according to our 169 input tapes.
\end{itemize}

The \textit{results.csv} file summarizes the cost of the list of detours induced by each algorithm, associated to the time-to-solution to get this 
list for each of the three penalties above presented. We also record the lower bound for each algorithm on each input.

\subsubsection{`Figure' folder}
This folder contains a R script that processes the \textit{run/results.csv} to reproduce the figures presented in \Cref{sec:expe} of the paper.

\end{document}